\numberwithin{equation}{section}
\newtheorem{theorem}{Theorem}[section]
\newtheorem{lemma}[theorem]{Lemma}
\newtheorem{prop}[theorem] {Proposition}
\newtheorem{definition}[theorem] {Definition}
\theoremstyle{definition}
\newtheorem{assumption}{Assumption}
\theoremstyle{remark}
\newtheorem{remark}[theorem]{Remark} 
\newtheorem{example}[theorem]{Example}
\newcommand{\e}{\mathrm{e}}
\newcommand{\N}{\mathbb{N}}
\newcommand{\R}{\mathbb{R}}
\newcommand{\Z}{\mathbb{Z}}
\newcommand{\dd}{\mathrm{d}} 
\newcommand{\eps}{\varepsilon}
\newcommand{\la}{\langle}
\newcommand{\ra}{\rangle}
\newcommand{\vect}[1]{\boldsymbol{#1}}
\DeclareMathOperator{\dist}{dist}
\newcommand{\be}{\begin{equation}}
\newcommand{\ee}{\end{equation}}
\newcommand{\ba}{\begin{equation} \begin{aligned}}
\newcommand{\ea}{\end{aligned}\end{equation}}
\newcommand{\bes}{\begin{equation*}}
\newcommand{\ees}{\end{equation*}}
\def\1{{\mathchoice {1\mskip-4mu\mathrm l}      
{1\mskip-4mu\mathrm l}
{1\mskip-4.5mu\mathrm l} {1\mskip-5mu\mathrm l}}}
\newcommand{\heap}[2]{\genfrac{}{}{0pt}{}{#1}{#2}}
\begin{document}

\title{Cluster expansions with renormalized activities and applications to colloids}
\author{Sabine Jansen}
\address{Mathematisches Institut, Ludwig-Maximilians-Universit{\"a}t, 80333 M{\"u}nchen,  Germany}
\email{jansen@math.lmu.de} 
\author{ Dimitrios Tsagkarogiannis}
\address{Dipartimento di Ingegneria e Scienze dell'Informazione e Matematica, Universit\`a degli Studi dell'Aquila, 67100 L'Aquila, Italy}
\email{dimitrios.tsagkarogiannis@univaq.it}
\date{5 March 2019}
\maketitle
\begin{abstract} We consider a binary system of small and large objects in the continuous space interacting via a non-negative potential.
By integrating over the small objects, the effective interaction between the large ones becomes multi-body.
We prove convergence of the cluster expansion for the grand canonical ensemble of the effective system of
large objects. To perform the combinatorial estimate of hypergraphs
(due to the multi-body origin of the interaction) we exploit the underlying structure of the original binary system.
Moreover, we obtain a sufficient condition for convergence which involves the surface of the large objects
rather than their volume. This amounts to a significant improvement in comparison to a direct application of the
known cluster expansion theorems.
Our result is valid for the particular case of hard spheres (colloids) for which we rigorously
treat the depletion interaction.\\

\noindent\emph{Keywords}: cluster expansion for two-scale systems -- colloids -- depletion attraction -- effective multi-body interactions -- renormalized activity \\

\noindent \emph{AMS classification}: 82B05, 82D99
\end{abstract}

\tableofcontents

\section{Introduction}

The present article addresses cluster expansions for binary mixtures made up of ``small'' and ``large'' objects. Our initial motivation 
is the droplet picture of condensation~\cite{hill56book,stillinger1963frenkel-band,sator2003}, where the small objects are molecules of gas and the large objects nascent droplets of liquids or chunks of crystal. Another motivation is the study of colloidal dispersions~\cite{lekkerkerker-tuinier2011book}. Colloids are made of macromolecules with typical size of 1-1000 nanometer dispersed in a medium of much smaller molecules---for example, milk is a colloidal dispersion containing casein micelles (diameter about $200$ nm) dispersed in water~\cite[Chapter 1.1]{lekkerkerker-tuinier2011book}. 
Colloids are best known to mathematical physicists, perhaps, in the context of Brownian motion, first derived to describe the motion of large colloidal particles (e.g., pollen) in the solution~\cite[Chapter 8]{gallavotti1999book}. Devising good thermodynamic models for colloids from first principles is an active area of research in physical chemistry, but to the best of our knowledge, it has attracted little interest in mathematical statistical physics. 

One feature of colloidal systems is that large objects are subject to effective interactions mediated by small particles. A typical phenomenon is depletion attraction~\cite{lekkerkerker-tuinier2011book}. Effective interactions between large objects can be tuned by changing the concentration of the background of small objects, thus opening up intriguing possibilities for the design of new materials. 

Motivated by this point of view we develop a new convergence criterion for a cluster expansion that incorporates the asymmetry between large and small objects. The activity  of large objects is replaced with an effective or ``renormalized'' activity for large objects moving in a sea of small objects. The effective interaction between large objects is obtained by integrating out small objects at pinned positions of the large objects, performing a partial cluster expansion in the activity of small objects. This step bears some resemblance with integrating out a certain length scale in the theory of renormalization group theory~\cite{brydges2007RGlectures}. 
Our bookkeeping is inspired by the mixed partition function investigated by Bovier and Zahradn{\'\i}k~\cite[Section 2]{bovier-zahradnik2000}. 

Our main result is a novel sufficient convergence criterion formulated directly in terms of the activity of small objects and the effective activity of large objects (Theorems~\ref{thm1} and~\ref{thm2}). We apply the theorem to two concrete models, the model of penetrable hard spheres from colloid theory (see Section~\ref{sec:colloids}  below), also called Asakura-Oosawa model~\cite{binder-virnau-statt2014}, and a binary mixture of hard spheres with radii $R > r$ and with respective activities $z_R$ and $z_r$. In the second model the effective activity of large spheres is denoted $\widehat z_R$ and it is a function of $z_R$ and $z_r$. In the binary mixture of hard spheres it behaves, roughly, like 
\be \label{eq:intro1}
	 \widehat z_R = z_R \exp\Bigl( - z_r |B(0,R+r)| + O(z_r^2)\Bigr),
\ee 
with $B(0,R+r)$ the open ball of radius $R+r$ centered at the origin, see Section~\ref{hs} for precision.  In the Asakura-Oosawa model of penetrable hard spheres the $O(z_r^2)$ correction terms in Eq.~\eqref{eq:intro1} vanish, see Section~\ref{sec:colloids}. The effective activity $\widehat z_R$ takes into account the reduction in free volume available to the small spheres. Our convergence criterion works for small activities $z_r$ and effective activities of the order of 
\be \label{eq:intro2}
	 R^d  \widehat z_R \leq \mathrm{const}\,  \exp( - \mathrm{const}\,  z_r |B(0,R+r)\setminus B(0,R-r)|\Bigr),
\ee
see Lemmas~\ref{cor:csuffeasy} and~\ref{lem:suff-hs} below for precisison.

In the limit $R\gg r$ (called colloid limit~\cite[Section 1.3.6]{lekkerkerker-tuinier2011book}), Eq.~\eqref{eq:intro2} reveals two striking features. First, we see that the exponential decay required of the effective activity $\widehat z_R$ goes like $\exp( - \mathrm{const}\, z_r r R^{d-1} )$: it decays with the surface of the large spheres. This should be contrasted with the exponential decay in the volume of the large spheres imposed by Koteck{\'y}-Preiss type convergence conditions, see Proposition~\ref{prop:kpexponential}. Second, rewriting the bound~\eqref{eq:intro2} in terms of the original activity $z_R$ with the help of Eq.~\eqref{eq:intro1}, we see that our convergence condition covers activities $z_R$ that are exponentially large in the volume, 
\be
	R^d  z_R \leq \mathrm{const}\, \exp\Bigl( +\, \mathrm{const}\, z_r |B(0,R+r)|\, \bigl((1 - O( \tfrac{r}{R})\bigr)  + \text{higher order terms} \Bigr),
\ee
again a striking improvement over the exponential decay imposed by Koteck{\'y}-Preiss type conditions. Thus, not only is an expansion with effective activities possible but moreover it leads to drastic improvements over previously available bounds (e.g. \cite{bovier-zahradnik2000}), at least  for the two-scale systems under consideration.

Proofs require us to overcome an impasse: effective interactions between large objects are multi-body, and cluster expansions for multi-body interactions are considerably less developed than for pairwise interactions (see, however, ~\cite{greenberg1971,moraal1976,  procacci-scoppola2000, rebenko2005multibody}). Crucially, the combinatorics
of the multi-body interactions for continuum systems involve hypergraphs and do not permit a direct tree-graph-type inequality that could secure
a convergence result, see the comment at the end of \cite[Appendix B]{brydges1986clustercourse}. To overcome the impasse, we first map the hypergraphs to bipartite graphs (a classical trick in graph theory~\cite{sapozhenk2011hypergraphs}). Capturing the improvements brought about by the switch to effective activities requires further careful considerations, among which the exclusion of graphs with a specific type of articulation point and the choice of an appropriate tree partition scheme that takes into account the asymmetry between large and small objects. Our techniques rely heavily on the concrete form of our effective multi-body interaction; we leave as an open question to which extent our approach may cast light on general multi-body interactions. 

The remaining part of the article is organized as follows. In Section~\ref{sec:colloids} we describe in more detail the penetrable hard spheres model from colloid theory and flesh out the questions resolved in the present article. Section~\ref{sec:main} presents the main results for general binary mixtures with non-negative pair interactions. In section~\ref{sec:partialresum} we perform the partial resummations needed for the definition of the effective activity, effective interactions, and associated representation of the partition function. The key combinatorial estimate underpinning convergence proofs is given in Section~\ref{sec:newtreegraphs}. To conclude, we apply our general theorems to  two concrete models from colloid theory, the penetrable hard spheres model (Section~\ref{sec:appli-colloids}) and the colloid hard sphere model (Section~\ref{hs}). Both are binary mixtures of large and small spheres. In the colloid hard sphere model no two spheres may overlap, while the penetrable hard sphere model small spheres may freely overlap each other but cannot overlap with large spheres. 

\section{Motivation: the penetrable hard sphere-model from colloid theory} \label{sec:colloids} 

Consider a binary mixture of spheres in a box $\Lambda = [0,L]^3\subset \R^3$. The mixture consists of small spheres of radius $r$ and activity $z_r$, and large spheres of radius $R>r$ and activity $z_R$. We may consider the large spheres as colloidal particles moving in a solvent made up of the small spheres. Two large spheres are not allowed to overlap. A large and a small sphere are not allowed to overlap either: a large sphere centered at $x$ creates an excluded volume $B(x,R+r)$, which we may think of as the union of the sphere $B(x,R)$ itself and a ``depletion layer'' $B(x,R+r)\setminus B(x,R)$. 
In order to discuss the main idea we first consider the case of an ideal solvent:
small spheres do not interact between themselves (\emph{penetrable hard spheres} \cite[Chapter 2.1]{lekkerkerker-tuinier2011book}, \underline{Asakura-Oosawa model}~\cite{binder-virnau-statt2014}). In the next section, the main results will be presented in the general case of considering excluded volume for the solvent as well (\emph{colloid hard spheres}~\cite[Chapter 2.3]{lekkerkerker-tuinier2011book}). The grand-canonical partition function is 
\begin{multline} \label{eq:is-partition}
	\Xi_\Lambda(z_R,z_r) = \sum_{n_1,n_2=0}^\infty \frac{z_R^{n_1}}{n_1!} \frac{z_r^{n_2}}{n_2!} \int_{\Lambda^{n_1}}\Bigg\{\int_{\Lambda^{n_2}} \Biggl( \prod_{1\leq i < j \leq n_1} \1_{\{|x_i- x_j|\geq 2R\}}\Biggr) \\
	\times  \Biggl( \prod_{\substack{1\leq i \leq n_1 \\ 1\leq j \leq n_2}} \1_{\{|x_i- y_j|\geq R+r\}}\Biggr)\dd \vect y \Biggr\}
	   \dd \vect x ,
\end{multline} 
where we agree that integrals with zero integration variables are equal to $1$; in particular, the contribution from $n_1=n_2=0$ to the partition function is $1$. The pressure is 
\be
	p(z_R,z_r) = \lim_{L\to \infty}\frac{1}{|\Lambda|} \log \Xi_\Lambda(z_R,z_r). 
\ee
The degrees of freedom related to small spheres can be integrated out explicitly by a computation akin to the Widom-Rowlinson model~\cite{widom-rowlinson1970}. We have
\begin{align} \label{eq:is-integrated}
	\Xi_\Lambda (z_R,z_r) & = \sum_{n_1=0}^\infty \frac{z_R^{n_1}}{n_1!} \int_{\Lambda_1^{n_1}} \Biggl( \prod_{1\leq i < j \leq n_1} \1_{\{|x_i- x_j|\geq 2R\}}\Biggr)  
	\exp\Biggl( z_r \Bigl| \Lambda \setminus \bigcup_{i=1}^{n_1} B(x_i,R+r)\Bigr|\Biggr) \dd \vect x \notag  \\
	& = \e^{z_r|\Lambda|} \sum_{n_1=0}^\infty \frac{z_R^{n_1}}{n_1!} \int_{\Lambda^{n_1}} \Biggl( \prod_{1\leq i < j \leq n_1} \1_{\{|x_i- x_j|\geq 2R\}}\Biggr)  
	\exp\Biggl( -  z_r \Bigl| \Lambda \cap \bigcup_{i=1}^{n_1} B(x_i,R+r)\Bigr|\Biggr) \dd \vect x.
\end{align} 
The volume in the exponential in the first line is the \emph{free volume} available for the depletant particles. In the exponential in the second line of~\eqref{eq:is-integrated}, the intersection with $\Lambda$ only affects particles $x_i$ within a distance smaller than $R+r$ of the boundary.  Dropping it amounts to a change in boundary conditions that becomes irrelevant in the thermodynamic limit. Using inclusion-exclusion, we may write 
\be \label{eq:is-inclusion-exclusion}
	z_r \Bigl|  \bigcup_{i=1}^n B(x_i,R+r)\Bigr| = z_r \sum_{i=1}^n |B(x_i,R+r)| + \sum_{k=2}^n \sum_{1\leq i_1<\cdots <i_k\leq n } W_k (x_{i_1},\ldots,x_{i_k}),
\ee
 where \be \label{eq:wcol}
	W_k(x_{i_1},\ldots,x_{i_k}) = z_r (-1)^{k-1}  \bigl| B(x_{i_1},R+r)\cap \cdots \cap B(x_{i_k},R+r)|
\ee
which depends on $z_r$ and $R$, but for simplicity we do not make it explicit it in the notation. 
Define 
\be \label{eq:zhatcoll}
	\widehat z_R = \widehat z_R(z_R,z_r):= z_R \exp\Bigl( - z_r |B(0,R+r)|\Bigr). 
\ee
Neglecting boundary effects,  we obtain 
\begin{multline} \label{eq:iseffective1}
	\Xi_\Lambda (z_R,z_r) \approx \e^{z_r|\Lambda|} \Biggl( 1+ \sum_{n=1}^\infty \frac{{\widehat z_R}^{n}}{n!} \int_{\Lambda^n} \Biggl( \prod_{1\leq i < j \leq n} \1_{\{|x_i- x_j|\geq 2R\}}\Biggr) \\
	\times  \exp\Biggl( -   \sum_{k=2}^n \sum_{1\leq i_1<\cdots <i_k\leq n } W_k (x_{i_1},\ldots,x_{i_k};z_r) \Biggr) \dd \vect x \Biggr). 
\end{multline}  
It follows that the pressure $p(z_R,z_r)$ is given by the pressure $z_r$ of the ideal solvent plus the pressure of an effective model consisting of large spheres only
\be\label{eq:iseffective2}
	p(z_R,z_r) = z_r + \widehat p(\widehat z_R(z_R,z_r);z_r).
\ee
In the effective model large spheres have  activity $\widehat z_R(z_R,z_r)$ and are subject to additional effective multi-body interactions $W_k$. Notice that the pair interaction
 $W_2(x,y;z_r) = - z_r |B(x,R+r)\cap B(y,R+r)|\leq 0$ is attractive: this corresponds to the phenomenon of \emph{depletion attraction}~\cite{lekkerkerker-tuinier2011book}. The term $W_2$ is sometimes called \emph{Asakura-Oosawa potential} after~\cite{asakura-oosawa1958}. 
The representation~\eqref{eq:iseffective1} raises several questions: 
\begin{enumerate} 
	\item Is there a Mayer expansion for the effective pressure $\widehat p(\widehat z_R)$ in powers of $\widehat z_R$? The prime difficulty is that the theory of cluster expansions for multi-body interactions is much less developed than that of pair interactions.
		\item Is the radius of convergence of the expansion in powers of $\widehat z_R$ (and possibly $z_r$) 
	larger than in the expansion in terms of the original parameters $z_R$ and $z_r$, at least if $R\gg r$?
	\item Does the approach generalize to non-ideal solvents, i.e., interacting small spheres? 
	\item Does the approach generalize to molecules with flexible shapes as opposed to rigid hard spheres? 
\end{enumerate} 
We shall see that the answer to all four questions is yes.

\begin{remark} \label{rem:effective2body}
When $r/R<\frac23 \sqrt 3 - 1\approxeq 0.15$, the multi-body interactions $W_k$ vanish for $k\geq 3$ \cite[p. 118]{lekkerkerker-tuinier2011book} and the effective interaction between colloid hard spheres, incorporating the direct hard-core interaction, is given by 
\be
	W_2^\mathrm{eff}(x,y) = \begin{cases}
					\infty, & \quad |x-y|< 2R,\\
					- z_r |B(x,R+r)\cap B(y,R+r)|, &\quad 2 R\leq  |x-y|<2R+2r,\\
					0, &\quad |x-y|\geq R+r. 
				\end{cases} 
\ee
The interaction is minimal at $|x-y|=2R$, at which point the overlap of depletion layers has volume 
\be
	V_\mathrm{ov}= \frac{2\pi}{3} r^2(3R+2r) = 2\pi R r^2 \bigl(1+ O(\tfrac r R)\bigr) 
\ee
compare Eq.~(2.19a) in~\cite[Chapter 2]{lekkerkerker-tuinier2011book} (with $h=0$ and $\sigma = 2r$). In particular, the effective pair potential is stable and the stability constant is of the order of $R r^2$ (times a constant related to the kissing number). Therefore classical convergence criteria for the Mayer expansion for pair potentials apply. It is instructive to compare the criteria obtained in this way with the bounds that we derive in Section~\ref{sec:appli-colloids}, see Remark~\ref{rem:furtherimprovements} below. 
 For higher values of the ratio $r/R$, additional repulsive forces might be present at larger distances, as discussed in \cite{MCL}.

\end{remark}

\section{Main results} \label{sec:main} 

Let $(\mathbb X,\mathcal X)$ be a measurable space, $v:\mathbb X\times \mathbb X\to \R_+\cup \{\infty\}$ a measurable, non-negative, symmetric function (i.e., $v(x,y) = v(y,x)$), and $\mu$ a measure on $(\mathbb X,\mathcal X)$. Define $f(x,y):= \exp(- v(x,y)) - 1$, with the convention $\exp( - \infty) =0$. Consider the partition function 
\be \label{eq:zx}
	Z_\mathbb X :=1 + \sum_{n=1}^\infty \frac{1}{n!} \int_{\mathbb X^n} \prod_{1\leq i<j\leq n} (1 + f(x_i,x_j)) \dd \mu (x_1)\cdots \dd \mu(x_n). 
\ee
We are interested in the situation where the polymer space $\mathbb X$ has a bipartite structure in terms of ``small'' and ``large'' objects. Thus we assume that $\mathbb X = \mathbb X_\ell\cup \mathbb X_s$ with disjoint measurable subsets $\mathbb X_\ell$ and  $\mathbb X_s$ . We are after an expansion of $\log Z_{\mathbb X_s\cup \mathbb X_\ell}- \log Z_{\mathbb X_s}$ in terms of effective parameters. 

\begin{example}
Our guiding example is the simplistic colloid model from Section~\ref{sec:colloids}, for which we may set 
\be \label{eq:exo1}
	\mathbb X = \Lambda \times \{R,r\},\quad \mathbb X_\ell = \Lambda\times \{R\},\quad \mathbb X_s = \Lambda\times \{r\}.
\ee
The $\sigma$-algebra $\mathcal X$ is the product of the Borel $\sigma$-algebra and the discrete $\sigma$-algebra, and the reference measure $\mu$ is defined by 
\be \label{eq:exo2}
	\mu(A) = \int_\Lambda \1_A(x, R) z_R \,\dd x+ \int_\Lambda \1_A(x,r) z_r \,\dd x\qquad (A\in \mathcal X).
\ee
Mayer's $f$-function is 
\be \label{eq:exo3} 
	f\bigl( (x,a), (y,b)\bigr)  := \begin{cases} 
				- \1_{\{|x-y|< 2 R\}}, &\quad a=b=R,\\
				- \1_{\{|x-y|< R+r\}}, &\quad (a,b) = (r,R)\ \text{or } (R,r),\\
				0, &\quad (a,b) = (r,r).
			\end{cases} 
\ee
\end{example}

\subsection{Preparations I: Effective activity and effective interactions}  

The first step is to integrate out small objects at fixed positions of large objects. This step is analogous to Eqs.~\eqref{eq:is-integrated} and~\eqref{eq:iseffective1} for the hard spheres in the ideal solvent, however because of the interactions between small objects, the partition function for small objects at given positions of the large objects can no longer be computed explicitly and requires already some cluster expansions in the small objects. 
This is given in Proposition~\ref{prop:multibody} below after we introduce some notation.

Recall the \emph{Ursell functions} \cite[Chapter 4.2.2]{ruelle1969book} 
\be \label{eq:ursell} 
	\varphi^\mathsf{T}(x_1,\ldots,x_n) = \sum_{\gamma \in \mathcal C_n} \prod_{\{i,j\} \in E(\gamma)} f(x_i,x_j) \qquad (n \in \mathbb N,\ x_1,\ldots,x_n\in \mathbb X)
\ee
where $\mathcal{C}_n$ is  the collection of connected graphs with vertex set $[n] =\{1,\ldots,n\}$. 
We introduce a new space $\mathbb Y= \sqcup_{k=1}^\infty \mathbb X_s^k$ with signed (or complex, if $\mu$ is complex) measure $\nu$ satisfying 
\be \label{eq:nudef}
	\int_{\mathbb{Y}} h(Y) \dd \nu(Y) = \sum_{k=1}^\infty \frac{1}{k!} \int_{\mathbb X_s^k} h(y_1,\ldots,y_k)  
\varphi^\mathsf T(y_1,\ldots, y_k) \dd \mu^k(y_1,\ldots, y_k) ,
\ee
whenever the integrals and the sum are absolutely convergent. We refer to elements $Y\in \mathbb Y$ as \emph{clouds} or \emph{chains}, compare~\cite{bovier-zahradnik2000}. The interaction between a chain $Y=(y_1,\ldots,y_k)\in \mathbb X_s^k \subset \mathbb Y$ and a large object $x\in \mathbb X_\ell$ is described with 
\be \label{eq:zetadef}
	\zeta (x,Y) =  \zeta\bigl(x,(y_1,\ldots,y_k)\bigr) :=  \prod_{j=1}^k (1 + f(x,y_j)) - 1.
\ee
The effective activity for large objects on a sea of small objects is the measure $\widehat \mu$ on $\mathbb X_\ell$ given by 
\be\label{eq:hatmudef}
	  \dd \widehat \mu(x) := e^{ \int_{\mathbb Y} \zeta(x,Y)  \dd \nu(Y)}\, \dd \mu(x).
\ee
Effective multi-body interactions between large objects are given by 
\be \label{eq:wj}
 - W_{\#J}(\vect x_J) := \int_\mathbb{Y}  \prod_{j\in J} \zeta(x_j,Y) \dd \nu(Y),
\ee
with $J$ a finite non-empty set of cardinality $\#J\geq 2$ and $\vect x_J = (x_j)_{j\in J}\in \mathbb X_\ell^J$. The following assumption guarantees that the quantities are well-defined. 

\begin{assumption} \label{ass:pusmall} 
	The pair interaction $v(x,y)$ is non-negative on $\mathbb X\times \mathbb X$ and for some function $c:\mathbb X\times \mathbb X\to \R_+$, we have 	
	\begin{align}
		\int_{\mathbb X_s} |f(y,y')| \e^{c(y')} \dd |\mu|(y') & \leq c(y), &&(y\in \mathbb X_s) \label{eq:cconv} \\
				\int_{\mathbb X_s} |f(x,y)|\e^{c(y)} \dd|\mu|(y) & < \infty,  &&(x\in \mathbb X_\ell). \label{eq:cconv2}
	\end{align}
\end{assumption}

\begin{lemma} \label{lem:well-defined} 
	Under Assumption~\ref{ass:pusmall}, we have 
	\be
		\int_{\mathbb Y} \prod_{j=1}^k|\zeta(x_j,Y)|\dd |\nu|(Y) <\infty,
	\ee
	for all $k\in \N$ and $x_1,\ldots,x_k\in \mathbb X_\ell$. 
\end{lemma}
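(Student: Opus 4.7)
The key observation is that non-negativity of $v$ forces $1 + f(x, y) = \e^{-v(x, y)} \in [0, 1]$, so each factor $1 + f(x_j, y_i)$ has modulus at most one. A straightforward telescoping induction then yields
\[
	0 \le - \zeta(x, Y) = 1 - \prod_{i=1}^n \bigl(1 + f(x, y_i)\bigr) \le \sum_{i=1}^n |f(x, y_i)|
\]
for every chain $Y = (y_1, \ldots, y_n) \in \mathbb X_s^n$; in particular $|\zeta(x, Y)| \le 1$. I will use the two bounds in tandem: discard all but one factor, obtaining $\prod_{j=1}^k |\zeta(x_j, Y)| \le |\zeta(x_1, Y)|$, and apply the telescoping estimate to the remaining factor.

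Unrolling the definition~\eqref{eq:nudef} of $\nu$, using Fubini (legitimate since the integrand is nonnegative and $|\mu|$ is a positive measure), and the symmetry of $|\varphi^\mathsf{T}|$ in its arguments, one gets
\[
	\int_\mathbb Y \prod_{j=1}^k |\zeta(x_j, Y)| \dd |\nu|(Y) \le \sum_{n=1}^\infty \frac{1}{(n-1)!} \int_{\mathbb X_s^n} |f(x_1, y_1)| \, |\varphi^\mathsf{T}(y_1, \ldots, y_n)| \dd |\mu|(y_1) \cdots \dd |\mu|(y_n).
\]
Under Assumption~\ref{ass:pusmall}, the classical Koteck{\'y}--Preiss/Ueltschi convergence estimate for cluster expansions---proved via the Penrose tree-graph identity (available since $v \ge 0$ forces $f \le 0$) followed by induction along the tree---gives
\[
	\sum_{n=1}^\infty \frac{1}{(n-1)!} \int_{\mathbb X_s^{n-1}} |\varphi^\mathsf{T}(y_1, \ldots, y_n)| \dd |\mu|(y_2) \cdots \dd |\mu|(y_n) \le \e^{c(y_1)}.
\]
Substituting this bound into the previous display reduces the target integral to $\int_{\mathbb X_s} |f(x_1, y_1)| \e^{c(y_1)} \dd |\mu|(y_1)$, which is finite by hypothesis~\eqref{eq:cconv2} (applied at the large-object $x_1 \in \mathbb X_\ell$).

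Nothing in this argument is delicate. The only (small) subtlety worth noting is that exploiting the trivial bound $|\zeta| \le 1$ to discard $k-1$ of the $k$ factors sidesteps any combinatorics involving trees attached to several ``roots'' $x_1, \ldots, x_k$, so that the Koteck{\'y}--Preiss estimate is invoked with only one pinned vertex and no additional bookkeeping is needed. The deeper multi-body combinatorics that motivate the rest of the paper therefore do not enter into the bare \emph{finiteness} statement of this lemma; they will reappear when one needs the sharper \emph{quantitative} control required for convergence of the full cluster expansion in Theorems~\ref{thm1} and~\ref{thm2}.
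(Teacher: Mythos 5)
Your proof is correct and follows essentially the same route as the paper's: discard all but one $\zeta$-factor via $|\zeta|\leq 1$, bound the remaining one by $\sum_i|f(x_1,y_i)|$, invoke the Ueltschi/Koteck{\'y}--Preiss estimate guaranteed by~\eqref{eq:cconv} to resum the Ursell functions into $\e^{c(y_1)}$, and finish with~\eqref{eq:cconv2}. No discrepancies to report.
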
 

\noindent The lemma is proven in Section~\ref{sec:partialresum}. It is a straightforward consequence of the standard inequalities
\be \label{ineq:zeta}
	- 1\leq \zeta(x,Y) \leq 0,\qquad |\zeta(x, (y_1,\ldots, y_k))|\leq \sum_{j=1}^k |f(x,y_j)|. 
\ee
and~\cite[Theorem 1]{ueltschi2004cluster}. The same theorem says that if in addition to Assumption~\ref{ass:pusmall}, we have 
	\be \label{eq:cfinitevol}
		\int_{\mathbb X_s}\e^{c(x)} \dd |\mu|(x) < \infty,
	\ee
then the partition function for small objects alone is given by 
\be \label{eq:zxsnu}
	\log Z_{\mathbb X_s} = \sum_{k=1}^\infty \frac{1}{k!}\int_{\mathbb X_s^k}\varphi^\mathsf T(y_1,\ldots,y_k)\dd \mu^k(y_1,\ldots, y_k) = \int_{\mathbb Y} 1\,\dd \nu(Y) = \nu(\mathbb Y)
\ee
with absolutely convergent sums and integral. 
  The following proposition takes the place of Eq.~\eqref{eq:iseffective1}.

\begin{prop}[Effective partition function] \label{prop:multibody}
Suppose that Assumption~\ref{ass:pusmall} and the finite-volume condition~\eqref{eq:cfinitevol} hold true. Then we have 
\be\label{multibody}
	\frac{Z_{\mathbb X_s \cup \mathbb X_\ell}}{Z_{\mathbb X_s}} = 1 + \sum_{m=1}^\infty \frac{1}{m!}\int_{\mathbb X_\ell^n} \prod_{1\leq i<j \leq m} (1+f(x_i,x_j)) \exp\Biggl(-\sum_{\heap{J\subset [m]}{\#J\geq 2}} W_{\#J}(\vect x_J) \Biggr)  \dd {\widehat \mu}^m(\vect{x})
\ee
and the integrals and sums entering the definitions of  $\widehat \mu$ in \eqref{eq:hatmudef} and $W_{\#J}$ 
in \eqref{eq:wj} are absolutely convergent. 
\end{prop}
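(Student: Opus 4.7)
The plan is to integrate out the small objects at fixed positions of the large ones by means of a partial cluster expansion, and then read off the effective activity and effective multi-body interactions from that expansion.

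First I would split the partition function according to the numbers of large and small objects. For $\vect x = (x_1,\ldots,x_m)\in \mathbb X_\ell^m$, let
\[
Z_{\mathbb X_s}(\vect x) := \sum_{k=0}^\infty \frac{1}{k!}\int_{\mathbb X_s^k} \prod_{i=1}^m\prod_{j=1}^k(1+f(x_i,y_j)) \prod_{1\leq i<j\leq k}(1+f(y_i,y_j))\,\dd \mu^k(\vect y).
\]
A standard rearrangement of the expansion for $Z_{\mathbb X_s \cup \mathbb X_\ell}$ in \eqref{eq:zx} yields
\[
Z_{\mathbb X_s\cup \mathbb X_\ell} = \sum_{m=0}^\infty \frac{1}{m!}\int_{\mathbb X_\ell^m}\prod_{1\leq i<j\leq m}(1+f(x_i,x_j))\,Z_{\mathbb X_s}(\vect x)\,\dd \mu^m(\vect x),
\]
so it suffices to evaluate $Z_{\mathbb X_s}(\vect x)$.

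The key observation is that $Z_{\mathbb X_s}(\vect x)$ is itself a partition function of the form \eqref{eq:zx} in which the reference measure is replaced by $\dd \mu_{\vect x}(y) := \prod_{i=1}^m(1+f(x_i,y))\,\dd\mu(y)$. Since $0\leq 1+f(x_i,y)\leq 1$, we have $|\mu_{\vect x}|\leq |\mu|$, so Assumption~\ref{ass:pusmall} and the finite-volume condition \eqref{eq:cfinitevol} pass directly to $\mu_{\vect x}$ with the same majorant $c(\cdot)$. Ueltschi's cluster expansion theorem then gives the absolutely convergent expansion
\[
\log Z_{\mathbb X_s}(\vect x) = \sum_{k=1}^\infty \frac{1}{k!}\int_{\mathbb X_s^k} \varphi^\mathsf T(y_1,\ldots,y_k)\prod_{i=1}^m\prod_{j=1}^k(1+f(x_i,y_j))\,\dd\mu^k(\vect y) = \int_{\mathbb Y}\prod_{i=1}^m\bigl(1+\zeta(x_i,Y)\bigr)\dd\nu(Y),
\]
where the last equality uses the definition \eqref{eq:zetadef} of $\zeta$ together with the definition \eqref{eq:nudef} of $\nu$.

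Next I would expand $\prod_{i=1}^m(1+\zeta(x_i,Y)) = \sum_{J\subseteq [m]}\prod_{i\in J}\zeta(x_i,Y)$ and exchange sum and integral (justified by Lemma~\ref{lem:well-defined} and the bound $|\zeta|\leq 1$). The $J=\emptyset$ contribution is $\nu(\mathbb Y) = \log Z_{\mathbb X_s}$ by \eqref{eq:zxsnu}, the singletons $J=\{i\}$ produce $\sum_{i=1}^m \int_\mathbb Y\zeta(x_i,Y)\,\dd\nu(Y)$, which are exactly the logarithms of the Radon--Nikodym factors converting $\dd\mu$ into $\dd\widehat\mu$ in \eqref{eq:hatmudef}, and the remaining $J$'s give $-\sum_{\#J\geq 2}W_{\#J}(\vect x_J)$ by \eqref{eq:wj}. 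Exponentiating, substituting back into the decomposition of $Z_{\mathbb X_s\cup \mathbb X_\ell}$ displayed above, and dividing by $Z_{\mathbb X_s}$ produces the desired identity \eqref{multibody}.

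The main obstacle is the bookkeeping of convergence at each level: checking that Ueltschi's theorem applies uniformly in the frozen positions $\vect x$, that the term-by-term expansion in $J\subseteq[m]$ converges absolutely (so that the sum--integral exchange and the exponentiation are legitimate), and that absolute integrability persists after inserting $Z_{\mathbb X_s}(\vect x)$ into the outer integral over $\mathbb X_\ell^m$. The crucial simplification that renders all of this routine is the sign-preserving inequality $0\leq 1+f(x_i,y)\leq 1$ (Assumption~\ref{ass:pusmall} about non-negativity of $v$), which forces the external field created by the large particles to \emph{decrease} the activity of the small ones and hence preserves the cluster-expansion convergence hypothesis; together with Lemma~\ref{lem:well-defined}, this handles every interchange step with no additional effort.
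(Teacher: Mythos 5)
Your proposal is correct and follows essentially the same route as the paper: the paper's Lemma~\ref{lem:bz} is exactly your partial resummation of $Z_{\mathbb X_s}^{\vect x}$ via Ueltschi's theorem with the modified measure $\prod_i(1+f(x_i,y))\,\dd\mu(y)$ (using $0\leq 1+f\leq 1$ to preserve the convergence condition), and the subsequent binomial expansion of $\prod_i(1+\zeta(x_i,Y))$ into the $\log Z_{\mathbb X_s}$, single-site, and $\#J\geq 2$ contributions is precisely the paper's final rearrangement before dividing by $Z_{\mathbb X_s}$.
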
 

\noindent The proposition is proven in Section~\ref{sec:partialresum}.
For later purpose we define
\begin{equation}\label{psi}
\psi(x_1,\ldots, x_m):=
\prod_{1\leq i<j \leq m} (1+f(x_i,x_j)) \exp\Biggl(-\sum_{\heap{J\subset [m]}{\#J\geq 2}} W_{\#J}(\vect x_J) \Biggr),
\end{equation}
for the integrand in~\eqref{multibody}, with $\psi(x_1) = 1$, and let $\bigl( \psi^\mathsf T(x_1,\ldots, x_m))_{m\in \N}$ be the uniquely defined family of symmetric functions such that for all $m\in \N$ and $x_1,\ldots, x_m\in \mathbb X_\ell$, we have
\be\label{psitdef}
	\psi(x_1,\ldots, x_m) =: \sum_{r=1}^m \sum_{\{V_1,\ldots,V_r\}\in \mathcal P_m} \psi^\mathsf T\bigl( (x_i)_{i\in V_1}\bigr)\cdots \psi^\mathsf T\bigl( (x_i)_{i\in V_r}\bigr)
\ee
with $\mathcal P_m$ the collection of set partitions of $[m]= \{1,\ldots,m\}$. Later, in Proposition~\ref{prop:psit} we give an explicit formula for $\psi^\mathsf T$.

\begin{remark}
The effective multi-body interaction is stable: we have 
\begin{align} \label{eq:stability} 
	\sum_{\heap{J\subset [m]}{\#J\geq 2}} W_{\#J}(\vect x_J) & = -  \int_{\mathbb Y} \prod_{j=1}^m \bigl( 1+ \zeta(x_j, Y)\bigr) \dd \nu(Y) + \sum_{j=1}^m \int_{\mathbb Y} \zeta(x_j,Y) \dd \nu(Y) \notag \\
	& \geq - m  \sum_{j=1}^m b(x_j)
\end{align}
with $b(x_j)= - \int_{\mathbb Y} \zeta(x_j,Y) \dd \nu(Y)$. 
\end{remark} 

\subsection{Preparations II: Hypergraphs and bipartite leaf-constrained graphs} 
Proposition~\ref{prop:multibody} suggests to look for an expansion of $\log Z_{\mathbb X_s\cup \mathbb X_\ell} - \log Z_{\mathbb X_s}$ in terms of the effective activity. To help motivate the form of expansion coefficients, let us rewrite the integrand in~\eqref{multibody} in a slightly different way: First, in terms of hypergraphs on $m$ vertices and second, with a new class of graphs on a larger set of vertices. 

A \emph{hyperedge} on some underlying set $V$ is a subset $J\subset V$ of cardinality at least $2$; we write $\mathcal E[V]$ for the set of hyperedges on $V$. A \emph{hypergraph} is a pair $\mathfrak h = (V,E)$ consisting of an arbitrary set of vertices $V = V(\mathfrak h)$ and set of hyperedges $E= E(\mathfrak h)\subset \mathcal E[V]$. For $m\in \N$, we write 
\be
	\mathcal E_m:=\{J\subset \{1,\ldots,m\}\mid \#J \geq 2\}. 
\ee
for the set of hyperedges on $\{1,\ldots,m\}$ and $\mathcal H_m$ for the set of hypergraphs with vertices $1,\ldots,m$. 

The contribution from the effective interactions to the integrand in~\eqref{multibody} is 
\be \label{eq:hypergraph}
	\exp \Bigl( - \sum_{\substack{J\subset [m]\\ \#J\geq 2}} W_{\#J}(\vect x_J) \Bigr) 
		= \sum_{\mathfrak h\in \mathcal H_m} \prod_{J\in E(\mathfrak h)} \bigl(\e^{- W_{\#J}(\vect x_J)} - 1\bigr). 
\ee
Ideally, we would like to be able to work directly with such an expression, however we are not able to do so and need to transform hypergraphs into objects amenable to tree-graph inequalities. To that aim we start by expanding the exponentials. Remembering the definition~\eqref{eq:wj} of $W_{\#J}$, we get 
\begin{align}
	\exp \Bigl( - \sum_{\substack{J\subset [m]\\ \#J\geq 2}} W_{\#J}(\vect x_J) \Bigr) 
		& = \sum_{\mathfrak h\in \mathcal H_m} \prod_{J\in E(\mathfrak h)} \Biggl(\sum_{k_{J} = 1}^\infty \frac{1}{k_J!} (- W_{\#J}(\vect x_J))^{k_{J}}\Biggr) \notag \\
		& = \sum_{k=0}^\infty\frac{1}{k!} \   \sum_{\substack{ (k_J)_{J\in \mathcal E_m}\in \N_0^{\mathcal E_m}:\\ \sum_{J\in \mathcal E_m} k_J = k}} \frac{k!}{\prod_{J\in \mathcal E_m} k_J!} \prod_{J\in \mathcal E_m}  \Biggl( \int_{\mathbb Y} \prod_{j\in J} \zeta(x_j, Y) \dd \nu(Y)\Biggr)^{k_{J}}.~\label{eq:multigraph}
\end{align} 
We may think of this expression as a sum over hypergraph with multiple edges, with $k$ the total number of edges and $k_J$ the multiplicity of the hyperedge $J$. The multinomial corresponds to the number of ways to distribute $k$ distinct labels $m+1,\ldots,m+k$ to the hyperedges. Thus~\eqref{eq:multigraph} is a sum over edge-labelled multi-hypergraphs.

Each of the $k$ hyperedges in the edge-labelled multigraph comes with an integral over the space $\mathbb Y$ of chains, so it is natural to switch to a new graph on $m+k$ vertices. The vertices $1,\ldots,m$ represent the large objects with coordinates $q_1,\ldots,q_m$, and vertices $m+1,\ldots,m
+k$ represent chains of small objects. To simplify language a bit, we call the vertices $1,\ldots,m$ \emph{stars} and the vertices $m+1,\ldots,m+k$ \emph{clouds}. A new graph $\gamma'$ (see Figure~\ref{fig1}) is obtained from the underlying edge-labelled multigraph as follows: 
\begin{itemize} 
	\item The graph $\gamma'$ is bipartite:  Edges that link two stars and edges that link two clouds are forbidden. 
	\item An edge $\{s,k\}$ consisting of a star $s$ and a cloud $k$ belongs to $\gamma'$ if and only if in the edge-labelled multigraph, the vertex $s$ belongs to the hyperedge with edge label $k$. 
\end{itemize} 

\begin{figure}
	\begin{subfigure}[b]{0.4\textwidth}
\begin{tikzpicture}[scale=0.6, 
every label/.style ={font=\tiny}, 
vertex/.style={minimum size=2pt,fill,draw,circle},
open/.style={fill=none}, 
cross/.style={minimum size=2pt,fill,draw,circle},
sibling distance=1.5cm,
level distance=1.00cm, 
every fit/.style={ellipse,draw,inner sep=+8pt}, leaf/.style={label={[name=#1]below:$#1$}},auto
]
\draw node [vertex,label={left:$1$}] (x1) at (2,-2) {};
\draw node [vertex,label={right:$2$}] (x2) at (4,0) {};
\draw node [vertex,label={left:$3$}] (x3) at (7,-3) {};
\draw node [vertex,label={left:$4$}] (x4) at (9,-5) {};
\draw node [vertex,label={left:$5$}] (x5) at (8,2) {};
\draw node [vertex,label={left:$6$}] (x6) at (10,3) {};
\node [fit= (x1) (x2) (x3), label={left:$E_1$}] {};
\node [fit= (x3) (x4), label={left:$E_2$}] {};
\node [fit= (x3) (x4), label={right:$E_2$}] {};
\node [fit= (x5) (x6), label={left:$E_3$}] {};
 \end{tikzpicture}
		\caption{Hypergraph consisting of hyperedges $E_1=\{1,2,3\}$, $E_2=\{3,4\}$ (with multiplicity $2$) and $E_3=\{5,6\}$}
		\label{fig1a}
	\end{subfigure}
	\hfill
	\begin{subfigure}[b]{0.4\textwidth}
 \begin{tikzpicture}[scale=0.6,
 every label/.style ={font=\tiny}, 
vertex/.style={minimum size=2pt,fill,draw,circle},
open/.style={fill=none}, 
cross/.style={minimum size=2pt,fill,draw,circle},
sibling distance=1.5cm,
level distance=1.00cm, 
every fit/.style={ellipse,draw,inner sep=+8pt}, leaf/.style={label={[name=#1]below:$#1$}},auto
]
\draw node [vertex,label={left:$1$}] (x1) at (2,-2) {};
\draw node [vertex,label={right:$2$}] (x2) at (4,0) {};
\draw node [vertex,label={left:$3$}] (x3) at (7,-3) {};
\draw node [vertex,label={left:$4$}] (x4) at (9,-5) {};
\draw node [vertex,label={left:$5$}] (x5) at (8,2) {};
\draw node [vertex,label={left:$6$}] (x6) at (10,3) {};
\draw node [vertex, open,label={right:$E_1$}] (x7) at (4.8,-1.9) {};
\draw node [vertex, open,label={right:$E_2$}] (x8) at (8.4,-3.6) {};
\draw node [vertex, open,label={left:$E_2$}] (x9) at (7.6,-4.4) {};
\draw node [vertex, open,label={left:$E_3$}] (x10) at (9,2.5) {};
\draw (x1) edge [" "] (x7) ;
\draw (x2) edge [" "] (x7) ;
\draw (x3) edge [" "] (x7) ;
\draw (x3) edge [" "] (x8) ;
\draw (x3) edge [" "] (x9) ;
\draw (x4) edge [" "] (x8) ;
\draw (x4) edge [" "] (x9) ;
\draw (x5) edge [" "] (x10) ;
\draw (x6) edge [" "] (x10) ;
\end{tikzpicture}
		\caption{Bipartite leaf-constrained graph with black vertices (``stars") and white vertices (``clouds")}
		\label{fig1b}
	\end{subfigure}
	\caption{Example of a hypergraph and its bipartite leaf-constrained graph representation.}
	\label{fig1}
\end{figure}

Every cloud is linked to at least two stars because in the original multigraph, every hype
ge comprises at least $2$ vertices. Put differently, clouds cannot be leaves. 

Accordingly let us write $\sum^{m,k}_{\gamma}$ 
for the sum over graphs $\gamma$ with vertices $\{1,\ldots,m+k\}$ such that (i) if $\{i,j\}\in E(\gamma)$, then $i\leq m$ and $j\geq m+1$ (or the other way round), and (ii) every vertex $j\in \{m+1,\ldots,m+k\}$ connects to at least two distinct vertices in $\{1,\ldots,m\}$. Then~\eqref{eq:multigraph} becomes 
\be \label{eq:onourway}
	\exp \Bigl( - \sum_{\substack{J\subset [m]\\ \#J\geq 2}} W_{\#J}(\vect x_J) \Bigr) 
		= \sum_{k=0}^\infty\frac{1}{k!} \int_{\mathbb Y^k} \sideset{}{^{m,k}}\sum_{\gamma}\prod_{\{i,j\}\in E(\gamma)} \zeta(x_i,Y_j) \dd \nu(Y_{m+1})\cdots\dd \nu(Y_{m+k}).
\ee
Finally, in order to compute \eqref{psi}, we need to take into account interactions between large objects, i.e., we have to multiply~\eqref{eq:onourway} by $\prod_{i<j} (1+ f(x_i,x_j))$. This results in a sum over graphs with added links between stars. 

\begin{definition} 	\label{def:cstargraphs}
	For $m,r\in \N_0$ with $m+r \geq 1$, let $\mathcal G_{m,r}^*$ be the class of graphs with vertex set $\{1,\ldots,m+r\}$ such that: 
	\begin{enumerate} 
			\item[(i)] The graph has no edges $\{k_1,k_2\}$ with $k_1,k_2\geq m+1$. 
			\item[(ii)]  Every vertex $k\in \{m+1,\ldots,m+r\}$ belongs to at least two distinct edges $\{s,k\}, \{t,k\}\in E(\gamma)$.
	\end{enumerate} 
 We denote by
	 $\mathcal  C^*_{m,r}:= \mathcal G_{m,r}^* \cap  \mathcal C_{m+r}$ the class of connected graphs that satisfy the same constraints and in addition are connected. Similarly, we denote by 
	 $\mathcal  T^*_{m,r}:= \mathcal G_{m,r}^* \cap  \mathcal T_{m+r}$ the corresponding trees.
\end{definition} 

\noindent 
 In the next proposition we give a formula for the truncated function $\psi^\mathsf T(x_1,\ldots,x_m)$ defined in \eqref{psitdef}. In order to do so, we expand the two factors of \eqref{psi} using \eqref{eq:multigraph} and introduce the
following graph weights :
\begin{equation}\label{omega}	
w_{m,r}(\gamma;x_1,\ldots,x_m, Y_{m+1},\ldots,Y_{m+r}):=
	\Biggl( \prod_{\heap{1\leq s<t\leq m}{\{i,j\}\in E(\gamma)}} f(x_s,x_t)\Biggr) 	
			\Biggl( \prod_{\heap{1\leq s\leq m<k\leq m+r}{\{s,k\}\in E(\gamma)}} \zeta(x_s,Y_k)\Biggr). 
\end{equation}
Note that the effective Boltzmann weight $\psi(x_1,\ldots,x_m)$ is given by a sum over $r\in \N_0$, graphs $\gamma\in \mathcal G^*_{m,r}$, with the $r$ cloud variables integrated over. 
Since the truncated function $\psi^\mathsf T(x_1,\ldots, x_m)$ should contain the ``connected part" of these graphs, we introduce  a modified Ursell function 
\begin{equation}\label{phistar}
\varphi^{\mathsf T}_*(x_1,\ldots,x_m;Y_{m+1},\ldots,Y_{m+r}):= 
	\sum_{\gamma \in \mathcal C^*_{m,r}} w_{m,r}(\gamma;x_1,\ldots,x_m, Y_{m+1},\ldots,Y_{m+r}).
\end{equation}
Furthermore we write $\mathfrak C_m$ for the set of connected hypergraphs with vertex set $\{1,\ldots,m\}$ -- a hypergraph $\mathfrak h $ is connected if for all vertices $i,j$, there exists a sequence of hyperedges $J_1,\ldots, J_p\in E(\mathfrak h)$ such that $i\in J_1$, $j \in J_p$, and $J_r\cap J_{r+1}$ is non-empty for all $r=1,\ldots, p-1$. 

\begin{prop}  \label{prop:psit}
	Under Assumption~\ref{ass:pusmall}, 
for $\psi^\mathsf T$ defined in \eqref{psitdef}
	we have: 
	\begin{align}
		\psi^\mathsf T(x_1,\ldots,x_m) & = \sum_{\mathfrak h \in \mathfrak C_m} \Biggl[\prod_{\substack {1\leq i <j \leq m:\\ \{i,j\}\in E(\mathfrak h)}}  \bigl(  \e^{- v (x_i,x_j) -  W_2(x_i,x_j)} - 1 \bigr) \Biggr]
		\Biggl[\prod_{\substack{J\in E(\mathfrak h): \\  \# J\geq 3}} \bigl( \e^{- W_{\#J}(\vect x_J)} - 1\bigr) \Biggr] \notag \\
		&  = \sum_{r=0}^\infty \frac{1}{r!} \int_{\mathbb Y^r} \varphi^{\mathsf T}_*(x_1,\ldots,x_m;Y_{m+1},\ldots,Y_{m+r}) \dd \nu^r(\vect Y) \label{eq:psit}
	\end{align}
	for all $m\geq 2$ and $x_1,\ldots,x_m\in \mathbb X_\ell^m$ and
	with absolutely convergent sums and integrals. 
\end{prop}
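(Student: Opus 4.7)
The plan is to prove both formulas by identifying the right-hand sides as the truncated family determined by the partition recursion~\eqref{psitdef}, which determines $\psi^\mathsf T$ uniquely. To that end I would rewrite $\psi(x_1,\ldots,x_m)$ in two equivalent ways as a sum of weights over combinatorial objects (hypergraphs for the first formula, bipartite leaf-constrained graphs with integrated cloud variables for the second) that admit a natural connected-component decomposition, factor the sum over connected components, and then read off $\psi^\mathsf T$ as the sum restricted to connected objects.

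For the first formula, I would combine the pair factor with the two-body effective interaction via $1+f(x_i,x_j)=\e^{-v(x_i,x_j)}$, obtaining
\[
\psi(x_1,\ldots,x_m) = \prod_{1\leq i<j\leq m}\e^{-v(x_i,x_j)-W_2(x_i,x_j)} \prod_{\substack{J \subset [m] \\ \#J\geq 3}} \e^{-W_{\#J}(\vect x_J)}.
\]
Expanding each factor as $\e^{-X}=1+(\e^{-X}-1)$ and distributing yields a sum over hypergraphs $\mathfrak h \in \mathcal H_m$ of precisely the weights appearing in the proposition. Any such $\mathfrak h$ decomposes uniquely into its connected components, with isolated vertices forming singleton components; the weight factors over components and singleton components contribute the value $1$. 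Grouping by the set partition of $[m]$ induced by the connected components matches the recursion~\eqref{psitdef}, with the truncated function equal to the sum over $\mathfrak C_m$ displayed in the statement.

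For the second formula I would instead expand each exponential through
\[
\e^{-W_{\#J}(\vect x_J)}=\sum_{k_J\geq 0}\frac{1}{k_J!}\Bigl(\int_{\mathbb Y}\prod_{j\in J}\zeta(x_j,Y)\dd\nu(Y)\Bigr)^{k_J},
\]
collect multinomial coefficients to relabel the $k=\sum_J k_J$ resulting hyperedges as cloud vertices $m+1,\dots,m+k$ as in~\eqref{eq:onourway}, and multiply by the standard Mayer expansion $\prod_{i<j}(1+f)=\sum\prod f$ of the pair factor. The result is
\[
\psi(x_1,\ldots,x_m)=\sum_{r\geq 0}\frac{1}{r!}\int_{\mathbb Y^r}\sum_{\gamma \in \mathcal G^*_{m,r}} w_{m,r}(\gamma;\vect x, \vect Y)\, \dd\nu^r(\vect Y).
\]
Each $\gamma \in \mathcal G^*_{m,r}$ decomposes into connected components, and because clouds attach only to stars and must attach to at least two of them, every cloud sits entirely in the star-component of its neighbors. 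Hence $\gamma$ is determined by (i) a partition $\{V_1,\ldots,V_s\}$ of $[m]$, (ii) an assignment of cloud labels $\{m+1,\ldots,m+r\}$ into ordered groups of sizes $r_1,\ldots,r_s$ (contributing a multinomial $r!/(r_1!\cdots r_s!)$), and (iii) connected bipartite leaf-constrained graphs on each block. After the $1/r!$ cancels the multinomial and cloud-label permutation symmetry allows a relabeling within each block so that the inner integral depends only on the block size, the sum factorizes over blocks. Uniqueness of~\eqref{psitdef} then identifies $\psi^\mathsf T$ with the stated integral of $\varphi^\mathsf T_*$.

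Absolute convergence of both expressions reduces to Lemma~\ref{lem:well-defined} together with the elementary bound $|\e^{-W_{\#J}(\vect x_J)}-1|\leq \e^{|W_{\#J}(\vect x_J)|}\,|W_{\#J}(\vect x_J)|$ and the stability estimate~\eqref{eq:stability}, applied term by term. The main obstacle is the combinatorial bookkeeping in the second identity: one must recognize that the constraint ``every cloud connects to at least two distinct stars'' is precisely what forces each cloud into a unique star-component, so that the partition of the bipartite graph is inherited from the partition of the stars, and then verify that the factorial factors collapse cleanly across blocks to yield the factorization required by~\eqref{psitdef}.
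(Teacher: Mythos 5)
Your proposal is correct. For the first identity you do exactly what the paper does: absorb $1+f=\e^{-v}$ into the two-body effective weight, expand each exponential as $1+(\e^{-\cdot}-1)$ to get a sum over hypergraphs, and extract the connected part via the component decomposition and the uniqueness of the truncation in~\eqref{psitdef}. For the second identity your route differs from the paper's. The paper starts from the first line (connected hypergraphs) and re-expands each hyperedge weight into cloud integrals, the key point being the decomposition $\e^{-v(x_i,x_j)-W_2(x_i,x_j)}-1 = f(x_i,x_j)+(1+f(x_i,x_j))(\e^{-W_2(x_i,x_j)}-1)$, which guarantees that every hyperedge of the connected hypergraph is realized in the bipartite graph either by a direct star--star link or by a shared cloud (or both), so that connectivity is preserved. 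You instead expand the full $\psi$ into the class $\mathcal G^*_{m,r}$ (which the paper also records, in the remark before Proposition~\ref{prop:psit}), decompose into connected components, observe that the degree-$\geq 2$ constraint on clouds forces each cloud into the component of its star neighbours and forbids cloud-only components, and then match the multinomial cloud-label bookkeeping against $1/r!$ to factor over blocks; uniqueness of~\eqref{psitdef} then identifies the connected part. Your route avoids the paper's edge-splitting subtlety entirely, at the price of carrying out the component/label combinatorics explicitly; the paper's route makes the passage from the first line to the second line transparent edge by edge. Both arguments are sound, and your treatment of absolute convergence (finiteness of $W_{\#J}$ from Lemma~\ref{lem:well-defined} plus the exponential-series bounds) matches the level of detail the paper itself provides.
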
 

\begin{proof} 
	The first expression for $\psi^\mathsf T$ follows from the analogous expression for $\psi$ as a sum over not necessarily connected hypergraphs. The computation is similar to~\eqref{eq:hypergraph} but with the original interaction $v(x_i,x_j)$ added to the effective interaction $W_2(x_i,x_j)$ for edges of cardinality $2$. 
	
	In order to go from the first line to the second line, we proceed analogously to the reasoning preceding Definition~\ref{def:cstargraphs} and expand the exponentials. The only subtlety concerns pairwise interactions. Here we write first 
	\be
		  \e^{- v (x_i,x_j) -  W_2(x_i,x_j)} - 1 = f(x_i,x_j) + \bigl( 1+ f(x_i,x_j)\bigr) \bigl( \e^{- W_2(x_i,x_j)} -1 \bigr) 
	\ee
	and then expand $\e^{- W_2(x_i,x_j)}$. Accordingly the original edge $\{i,j\}\in E(\mathfrak h)$ leads to graphs $\gamma$ where the two stars $i$ and $j$ are linked directly, or where they are linked to a common cloud, or both. Adapting the reasoning preceding Definition~\ref{def:cstargraphs}, we arrive at the second line in ~\eqref{eq:psit}. The integrals are absolutely convergent because they come from the underlying integral representations for $W_{\#J}$ (see Lemma~\ref{lem:well-defined}), the sums are absolutely convergent because they come from expansions of the exponential.   
\end{proof}

\subsection{Main theorems}
A final additional ingredient is needed to formulate our main theorems. Our goal is to provide convergence conditions that not only work with the effective activities, but in addition capture improvements brought about by working with effective interactions. A key mechanism, as we shall see, is that we may replace the function $\zeta(x,Y)$ in convergence conditions by a function $\tilde\zeta(x,Y)$ that is smaller in absolute value. 

\begin{example}\label{ex1}
Consider for example our simplistic colloid model from Section~\ref{sec:colloids}. Since small objects do not interact, we may work directly with $\mathbb X_s$ rather than $\mathbb Y$ (in abstract terms, $\nu(\mathbb Y\setminus \mathbb X_s) = 0$). Among the terms to be estimated in cluster expansions, there will be terms of the form
\be
	\bigl( 1+ f(x_1,x_2)\bigr) \zeta(x_1,y) \zeta(x_2,y) = \1_{\{|x_1-x_2| \geq2 R\}} \1_{\{|y- x_1|< R+r\}} \1_{\{|y- x_2|< R+r\}} 
\ee
where by some abuse of notation we identify $x_i\in \Lambda\subset \R^3$ with $(x_i,R)\in \mathbb X_\ell$, similarly for $y$ and $(y,r)$. If $|x_1-x_2|\geq2R$ and $\max ( |y-x_1|, |y-x_2|)< R+r$, then by the triangle inequality 
\be
	|y-x_1|\geq |x_2-x_1|- |y-x_2|>R-r,
\ee
similarly $|y-x_2|>R-r$. Thus if we set 
\be \label{eq:tildezetacoll}
	\tilde \zeta(x,y) = - \1_{\{R-r<|x-y|< R+r\}}
\ee
we get 
\be
	\bigl( 1+ f(x_1,x_2)\bigr) \bigl|\zeta(x_1,y) \zeta(x_2,y)\bigr|  =  \bigl( 1+ f(x_1,x_2)\bigr)\bigl| \tilde \zeta(x_1,y) \tilde \zeta(x_2,y)\bigr|.
\ee
\end{example}
More generally, we assume that the following holds true. 

\begin{assumption} \label{ass:tildezeta}
A function $\tilde\zeta:\mathbb X_\ell\times \mathbb Y\to \R$ is chosen that  satisfies the following conditions: for all $k\geq 2$, $\mu^k$-almost all $(x_1,\ldots,x_k)\in (\mathbb X_\ell)^k$, and $\nu$-almost all $Y\in \mathbb Y$, we have 
\begin{equation}\label{tildezeta1}
	\prod_{1\leq i < j \leq k} (1+f(x_i,x_j)) \prod_{i=1}^k \bigl|\zeta(x_i,Y)\bigr|\leq
			\prod_{1\leq i < j \leq k} (1+f(x_i,x_j)) \prod_{i=1}^k \bigl| \tilde \zeta(x_i,Y)\bigr| 
\end{equation}
and
\begin{equation}\label{tildezeta2}
	\bigl|\zeta(x_1,Y)\bigr| 
			\prod_{1\leq i<j\leq k}(1+f(x_i,x_j))
		\Biggl|\prod_{j=2}^k(1+\zeta(x_j,Y))-1\Biggr|
		\leq
	\bigl|\tilde\zeta(x_1,Y)\bigr| 
		\prod_{1\leq i<j\leq k}(1+f(x_i,x_j)).
\end{equation}
\end{assumption}

\noindent A trivial possible choice is $\tilde\zeta:=\zeta$.
This would correspond to an estimate similar to \cite{bovier-zahradnik2000}.
However, we can do better, see Example~\ref{ex1} above as well as Section~\ref{hs}.

Our main results are:

\begin{theorem}\label{thm1}
Suppose that Assumptions~\ref{ass:pusmall} and~\ref{ass:tildezeta} hold true. 
Let $a: \mathbb X_\ell\to\mathbb R_+$ and $b: \mathbb Y\to\mathbb R_+$ be such that
for $\mu$-almost all $x\in \mathbb X_\ell$ and $\nu$-almost all  $Y\in\mathbb Y$, the following conditions hold:
\begin{align}
\int_{\mathbb Y} |\tilde\zeta(x,Y')|e^{b(y')}\dd|\nu|(Y')
+\int_{\mathbb X_\ell} |f(x,x')|e^{a(x')} \dd|\widehat\mu|(x') & \leq a(x), \label{suff1}\\
\int_{\mathbb X_\ell}|\tilde\zeta(x',Y)|e^{a(x')}\dd|\widehat\mu|(x')
& \leq b(Y)\label{suff2}.
\end{align}
Then, for $\mu$-almost all $x_1\in \mathbb X_\ell$ we have:
\begin{equation}\label{mainest1}
\sum_{m=2}^\infty \frac{1}{(m-1)!}\int_{\mathbb X_\ell^m}
\sum_{r=0}^\infty \frac{1}{r!} \int_{\mathbb{Y}^r}
\left|
\varphi^\mathsf T_*(x_1,\ldots,x_m;Y_1,\ldots,Y_r)	\right| 	
\dd |\nu^{r}|(\vect{Y})
\prod_{i=2}^m\dd |\widehat\mu|(x_i)\leq \e^{a(x_1)}-1,
\end{equation}
while for $\nu$-almost all $Y_1\in\mathbb Y$:
\begin{equation}\label{mainest2}
\sum_{m=2}^\infty \frac{1}{m!}\int_{\mathbb X_\ell^m}
\sum_{r=2}^\infty \frac{1}{(r-1)!} \int_{\mathbb{Y}^r}
\left|
\varphi^\mathsf T_*(x_1,\ldots,x_m;Y_1,\ldots,Y_r)	\right| 	
\prod_{j=2}^r\dd |\nu|(Y_j)
\dd |\widehat\mu^m|(\vect{x})\leq \e^{b(Y_1)}-1 .
\end{equation}
\end{theorem}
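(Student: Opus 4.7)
The strategy combines three ingredients: (i) a tree-graph inequality (the main combinatorial estimate of Section~\ref{sec:newtreegraphs}) that bounds $|\varphi^\mathsf T_*|$ by a positive sum over bipartite spanning trees in $\mathcal T^*_{m,r}$ with $|f|$ on star--star edges and $|\tilde\zeta|$ on star--cloud edges; (ii) reorganization of the resulting tree sum by rooting it at $x_1$ (for~\eqref{mainest1}) or at $Y_1$ (for~\eqref{mainest2}); (iii) a self-consistent generating-function bound closed by~\eqref{suff1}--\eqref{suff2}. Concretely, Step~(i) amounts to a Penrose-type identity of the form
\[
\bigl|\varphi^\mathsf T_*(x_1,\ldots,x_m;Y_1,\ldots,Y_r)\bigr|\le\sum_{\tau\in\mathcal T^*_{m,r}}\Biggl(\prod_{\substack{\{s,t\}\in E(\tau)\\s<t\le m}}\!\!\!|f(x_s,x_t)|\Biggr)\Biggl(\prod_{\substack{\{s,k\}\in E(\tau)\\s\le m<k}}\!\!\!|\tilde\zeta(x_s,Y_k)|\Biggr),
\]
obtained by extracting a spanning tree from each $\gamma\in\mathcal C^*_{m,r}$, rewriting non-tree star--star edges as $(1+f)$-factors, and absorbing the non-tree star--cloud edges together with those factors via Assumption~\ref{ass:tildezeta}, which trades $|\zeta|$ for $|\tilde\zeta|$ cluster-by-cluster around each cloud; the surviving $(1+f)\in[0,1]$ factors are then dropped using $v\ge 0$.

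For Step~(ii), I introduce the rooted-tree generating functions
\[
F(x):=\sum_{m\ge 1,\,r\ge 0}\frac{1}{(m-1)!\,r!}\int\sum_{\tau\in\mathcal T^*_{m,r}}\prod_{\{s,t\}\in E(\tau)}|f(x_s,x_t)|\prod_{\{s,k\}\in E(\tau)}|\tilde\zeta(x_s,Y_k)|\,\dd|\widehat\mu|^{m-1}\dd|\nu|^{r},
\]
with $x_1=x$ fixed and the empty tree ($m=1,r=0$) contributing $1$, and $G(Y)$ defined analogously for trees rooted at a cloud $Y$ (which must have at least one star child by the leaf-constraint). The standard exchange between labelled trees and the branching structure at the root yields the coupled system
\[
F(x)=\exp\!\Bigl(\!\int_{\mathbb X_\ell}\!|f(x,x')|F(x')\,\dd|\widehat\mu|(x')+\!\int_{\mathbb Y}\!|\tilde\zeta(x,Y)|G(Y)\,\dd|\nu|(Y)\Bigr),\quad 1+G(Y)=\exp\!\Bigl(\!\int_{\mathbb X_\ell}\!|\tilde\zeta(x',Y)|F(x')\,\dd|\widehat\mu|(x')\Bigr),
\]
and the left-hand side of~\eqref{mainest1} is precisely $F(x_1)-1$ after the bound of Step~(i).

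For Step~(iii), I claim $F(x)\le e^{a(x)}$ and $1+G(Y)\le e^{b(Y)}$, proven by induction on the maximal tree size. Let $F_N,G_N$ denote the truncations to trees with at most $N$ vertices; the base case $F_1\equiv 1$, $G_1\equiv 0$ is trivial since $a,b\ge 0$. For the inductive step, substituting $F_N\le e^a$ and $1+G_N\le e^b$ into the truncated recursion and using~\eqref{suff2} gives $1+G_{N+1}(Y)\le\exp\bigl(\int|\tilde\zeta(x',Y)|e^{a(x')}\dd|\widehat\mu|(x')\bigr)\le e^{b(Y)}$; then $e^b-1\le e^b$ together with~\eqref{suff1} yields $F_{N+1}(x)\le\exp\bigl(\int|f|e^a\dd|\widehat\mu|+\int|\tilde\zeta|e^b\dd|\nu|\bigr)\le e^{a(x)}$. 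Monotone convergence as $N\to\infty$ concludes the proof of~\eqref{mainest1}. The mirror argument, rooting the trees at $Y_1$, shows $G(Y_1)\le e^{b(Y_1)}-1$ and hence~\eqref{mainest2}.

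The main obstacle is the tree-graph inequality of Step~(i). It does not follow directly from classical Penrose or Brydges--Kennedy identities, because the ``clouds cannot be leaves'' constraint of $\mathcal C^*_{m,r}$ interacts non-trivially with the partition scheme, and the replacement $\zeta\to\tilde\zeta$ crucially relies on retaining the $(1+f)$ factors between pairs of stars that share a common cloud neighbor throughout the bookkeeping. This is the novelty announced in the introduction and the technical content of Section~\ref{sec:newtreegraphs}; once it is available, Steps~(ii)--(iii) reduce to standard Brydges--Federbush / Kotecky--Preiss manipulations adapted to the bipartite setting.
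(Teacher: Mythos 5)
Your overall architecture matches the paper's: a tree--graph inequality for $\varphi^\mathsf T_*$ followed by a rooted-tree/fixed-point argument closed by \eqref{suff1}--\eqref{suff2}. Steps~(ii)--(iii) are sound and are essentially what the paper does by passing to the abstract polymer space $\mathbb X_\ell\sqcup\mathbb Y$ and citing the tree estimate of Poghosyan--Ueltschi; your explicit induction on truncated generating functions is a perfectly acceptable substitute for that citation. The problem is Step~(i), which carries all the weight and is both misstated and unproved.

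As stated, the inequality is false: you sum over $\tau\in\mathcal T^*_{m,r}$, i.e.\ over trees in which every cloud has at least two star neighbours, and your generating function $G$ enforces the same constraint (every non-root cloud gets a parent star plus at least one star child). Take $m=2$, $r=2$: every graph in $\mathcal C^*_{2,2}$ contains the $4$-cycle $1\!-\!3\!-\!2\!-\!4\!-\!1$, so $\varphi^\mathsf T_*(x_1,x_2;Y_3,Y_4)=(1+f(x_1,x_2))\prod_{i,k}\zeta(x_i,Y_k)\cdot(\text{const})\neq 0$ in general, while $\mathcal T^*_{2,2}=\emptyset$ (a tree on $4$ vertices has $3$ edges, but two clouds of degree $\ge 2$ already require $4$). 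The correct statement (Proposition~\ref{prop1} in the paper) sums over \emph{bipartite} trees in which clouds \emph{may} be leaves; a leaf cloud contributes a single factor $|\tilde\zeta(x_s,Y_k)|$, and this is precisely where the second condition \eqref{tildezeta2} of Assumption~\ref{ass:tildezeta} is needed (your sketch only ever invokes the ``cluster-by-cluster'' mechanism of \eqref{tildezeta1}). Moreover, even for the corrected statement you give no proof: the argument requires (a) the specific lexicographic edge order in which every cloud--star edge precedes every star--star edge, (b) the associated Kruskal partition scheme, and (c) Lemma~\ref{lem:crucialstar}, which guarantees that whenever a cloud is adjacent in the tree (or via $E'(\tau)$) to two stars $s,s'$, the edge $\{s,s'\}$ lies in $E'(\tau)$, so that the factor $1+f(x_s,x_{s'})$ is actually available when you apply Assumption~\ref{ass:tildezeta}. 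Asserting that this is ``the technical content of Section~\ref{sec:newtreegraphs}'' leaves the central combinatorial lemma of the theorem unestablished. Once the inequality is stated over bipartite trees with leaf clouds allowed, your Steps~(ii)--(iii) go through after replacing $G(Y)$ by $1+G(Y)$ in the recursion for $F$, which is still controlled by \eqref{suff1}.
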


\noindent 
The proof is given in Section~\ref{sec:newtreegraphs}. For the expansion of the partition function, we assume in addition that 
\be \label{eq:fivo}
	\int_{\mathbb X} \e^{a(x)} \dd |\widehat \mu|(x) < \infty.
\ee
Remember the functions $\psi^\mathsf T$ from Proposition~\ref{prop:psit}. 

\begin{theorem}  \label{thm2}
	Under the conditions of Theorem~\ref{thm1}, we have
	\be \label{thirda}
		\sum_{m=2}^\infty \frac{1}{(m-1)!} \int_{\mathbb X_\ell^{m-1}} |\psi^\mathsf T(x_1,\ldots, x_m) |\dd |\widehat \mu|(x_2)\cdots \dd |\widehat \mu|(x_m) \leq \e^{a(x_1)} - 1< \infty. 
	\ee
	If in addition condition~\eqref{eq:fivo} holds true, then 
	\be \label{thirdb}
		\log \frac{Z_{\mathbb X_s\cup \mathbb X_\ell}}{Z_{\mathbb X_s} } 	= \sum_{m=1}^\infty \frac{1}{m!} \int_{\mathbb X_\ell^m} \psi^\mathsf T(x_1,\ldots, x_m) \dd \widehat \mu(x_1)\cdots \dd \widehat \mu(x_m) 
	\ee
	with absolutely convergent sums and integrals. 
\end{theorem}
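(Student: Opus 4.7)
The plan is to prove the two estimates \eqref{thirda} and \eqref{thirdb} in sequence, using Proposition~\ref{prop:psit} together with Theorem~\ref{thm1} for the first, and then the multibody representation of Proposition~\ref{prop:multibody} together with the exponential formula implicit in the definition~\eqref{psitdef} for the second.

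For \eqref{thirda}, I would start from the integral representation in the second line of~\eqref{eq:psit},
\[
	\psi^\mathsf T(x_1,\ldots,x_m) = \sum_{r=0}^\infty \frac{1}{r!} \int_{\mathbb Y^r} \varphi^\mathsf T_*(x_1,\ldots,x_m;Y_1,\ldots,Y_r) \dd \nu^r(\vect Y),
\]
apply the triangle inequality pointwise in $x_1$, then integrate $|\psi^\mathsf T|$ against $\frac{1}{(m-1)!}\dd|\widehat\mu|(x_2)\cdots \dd|\widehat\mu|(x_m)$ and sum over $m\geq 2$. Because all integrands under consideration are non-negative, Tonelli's theorem lets me interchange the $m$-sum, the $r$-sum, and all integrals freely. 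The resulting double series is precisely the left-hand side of~\eqref{mainest1} in Theorem~\ref{thm1}, so \eqref{mainest1} gives the bound $\e^{a(x_1)}-1$ directly.

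For \eqref{thirdb}, I first note that \eqref{thirda} together with the finite-volume assumption~\eqref{eq:fivo} yields the absolute convergence
\[
	\sum_{m=1}^\infty \frac{1}{m!}\int_{\mathbb X_\ell^m} |\psi^\mathsf T(x_1,\ldots,x_m)|\,\dd|\widehat\mu|^m(\vect x) \leq \int_{\mathbb X_\ell} \e^{a(x_1)}\dd|\widehat\mu|(x_1)<\infty,
\]
after bounding $1/m!\leq 1/(m-1)!$ in the $m$-th term (the $m=1$ contribution is $\int 1\,\dd|\widehat\mu|$ since $\psi^\mathsf T(x_1)=\psi(x_1)=1$, which is finite by~\eqref{eq:fivo}). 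In particular, every term in the series is well-defined and Fubini applies.

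The exponential identity then follows from the defining relation~\eqref{psitdef}: once we plug it into the representation~\eqref{multibody} of Proposition~\ref{prop:multibody}, we obtain
\[
	\frac{Z_{\mathbb X_s\cup \mathbb X_\ell}}{Z_{\mathbb X_s}} = 1+ \sum_{m=1}^\infty \frac{1}{m!}\int_{\mathbb X_\ell^m} \sum_{r=1}^m\sum_{\{V_1,\ldots,V_r\}\in \mathcal P_m} \prod_{i=1}^r \psi^\mathsf T\bigl((x_j)_{j\in V_i}\bigr) \,\dd \widehat\mu^m(\vect x),
\]
and regrouping the set partitions by block-size multiplicities, as in the standard combinatorial proof of the exponential formula, rewrites the right-hand side as $\exp\bigl(\sum_{k\geq 1} \tfrac{1}{k!}\int \psi^\mathsf T\,\dd\widehat\mu^k\bigr)$. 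The absolute convergence established above validates this rearrangement: both the logarithm of the left-hand side and the series in~\eqref{thirdb} are defined and the reordering of the double sum preserves the value. Taking logarithms yields~\eqref{thirdb}.

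The main technical obstacle I anticipate is bookkeeping the combinatorial exponential formula while staying inside the regime of absolute convergence, so that Fubini, sum-sum and sum-integral interchanges, and the passage from the product expansion of $\exp(\cdot)$ back to partitions are all justified at once. This is where the bound~\eqref{thirda} coupled with~\eqref{eq:fivo} does the essential work; everything else is routine expansion, reduction to Theorem~\ref{thm1}, and the standard partition-function-to-pressure identity encoded in the moment-cumulant relation~\eqref{psitdef}.
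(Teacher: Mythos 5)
Your proposal is correct and follows essentially the same route as the paper: the bound \eqref{thirda} is obtained by combining the representation of $\psi^\mathsf T$ via $\varphi_*^\mathsf T$ from Proposition~\ref{prop:psit} with the estimate \eqref{mainest1}, and \eqref{thirdb} follows from Proposition~\ref{prop:multibody} together with the moment--cumulant relation \eqref{psitdef}, with absolute convergence supplied by \eqref{thirda} and \eqref{eq:fivo}. You merely spell out the exponential-formula rearrangement that the paper leaves implicit.
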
 

\begin{proof} 
	The convergence estimate~\eqref{thirda} follows from the representation of $\psi^\mathsf T$ in terms of $\varphi_*^\mathsf T$ given in Proposition~\ref{prop:psit} and the estimate~\eqref{mainest1} in Theorem~\ref{thm1}. The estimate~\eqref{thirda} implies the convergence of the right-hand side of~\eqref{thirdb}. The expression then follows from Proposition~\ref{prop:multibody} and Eq.~\eqref{psitdef}. 
\end{proof}

\section{Partial resummations. Proof of Proposition~\ref{prop:multibody}} \label{sec:partialresum}

\begin{proof}[Proof of Lemma~\ref{lem:well-defined}]
	By Theorem~1 in~\cite{ueltschi2004cluster}, condition~\eqref{eq:cconv} ensures that 
	\be
		\sum_{k=2}^\infty\frac{1}{(k-1)!} \int_{\mathbb X_s^{k-1}} |\varphi^ \mathsf T( x_1,\ldots, x_k)| \dd|\mu|(x_2)\cdots \dd|\mu|(x_k) \leq \e^{c(x_1)} - 1
	\ee
	for all $x_1\in \mathbb X_s$. We combine this estimate with the inequalities~\eqref{ineq:zeta} and obtain
	\begin{align}
		\int_{\mathbb Y} \prod_{j=1}^k|\zeta(x_j,Y)|\dd |\nu|(Y)
				& \leq 		\int_{\mathbb Y} |\zeta(x_1,Y)|\dd |\nu|(Y) \notag  \\
				& \leq \sum_{k=1}^\infty \frac{1}{(k-1)!}  \int_{\mathbb X_s^{k-1}} \bigl|\zeta\bigl(x_1,(y_1,\ldots,y_k)\bigr) \bigr|  |\varphi^ \mathsf T( y_1,\ldots, y_k)| \dd|\mu|(y_1)\cdots \dd|\mu|(y_k) \notag \\
				& \leq \int_{\mathbb X_s} |f(x_1,y_1)| \e^{c(y_1)}  \dd|\mu|(y_1).
	\end{align} 
	The last expression is  finite by assumption~\eqref{eq:cconv2}. 
\end{proof} 

First we prove a lemma. It provides an expression for the partition function analogous to Eq.~\eqref{eq:is-integrated}. 

\begin{lemma} \label{lem:bz}
	Under Assumption~\ref{ass:pusmall}, we have  
	\begin{multline} \label{eq:bz}
		Z_{\mathbb X_s\cup \mathbb X_\ell} = Z_{\mathbb X_s} + \sum_{m=1}^\infty \frac{1}{m!} \int_{\mathbb X_\ell^m} \prod_{1\leq i<j\leq m} (1+f(x_i,x_j))\\
		\times \exp\left(\int_{\mathbb  Y}  \prod_{i=1}^m \bigl(1+\zeta(x_i,Y)\bigr) \dd \nu(Y)\right) \dd \mu^m(\vect x)
	\end{multline} 
	and the integral inside the exponential is absolutely convergent, for all $m\in \N$ and $x_1,\ldots,x_m\in \mathbb X_\ell$. 
\end{lemma}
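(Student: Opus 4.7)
The plan is to exploit the bipartite structure $\mathbb X = \mathbb X_\ell \sqcup \mathbb X_s$ to rewrite the defining series~\eqref{eq:zx} of $Z_{\mathbb X_s\cup \mathbb X_\ell}$ as an outer sum over configurations of large objects, with the sum over small-object configurations at fixed large-object positions resummed by Mayer's cluster expansion. This is the two-scale analogue of the classical device of integrating out the solvent, already carried out explicitly in~\eqref{eq:is-integrated} for the ideal penetrable-hard-sphere solvent; here the small objects genuinely interact, so the resummation must go through Ueltschi's theorem~\cite{ueltschi2004cluster} applied to a modified reference measure.

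First I would split each $n$-fold integral in~\eqref{eq:zx} according to which particles lie in $\mathbb X_\ell$ and which in $\mathbb X_s$. By disjointness of the two sets, symmetry of the integrand, and relabeling, the $2^n$ resulting sub-integrals collapse into a double sum indexed by $(m,k)$, with $m$ large and $k$ small particles. The $m=0$ contribution is exactly $Z_{\mathbb X_s}$. For $m\geq 1$, the pair interactions among the large objects factor out as $\prod_{1\leq i<j\leq m}(1+f(x_i,x_j))$, and the remaining sum over $k$ has the form of a small-object partition function with reference measure $\dd\tilde\mu_{\vect x}(y) := \prod_{i=1}^m(1+f(x_i,y))\,\dd\mu(y)$.

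Next I would invoke~\cite[Theorem 1]{ueltschi2004cluster} for this modified small-object partition function. Because $v\geq 0$ forces $0\leq 1+f(x_i,y)\leq 1$ pointwise, we have $|\tilde\mu_{\vect x}|\leq|\mu|$ as measures on $\mathbb X_s$, so Assumption~\ref{ass:pusmall} (and, where needed, the finite-volume hypothesis~\eqref{eq:cfinitevol}) continues to hold with the same function $c$ when $\mu$ is replaced by $\tilde\mu_{\vect x}$. The theorem then writes the inner sum as the exponential of the absolutely convergent Ursell series $\sum_{k\geq 1}\frac{1}{k!}\int_{\mathbb X_s^k}\varphi^{\mathsf T}(y_1,\ldots,y_k)\prod_{j=1}^k\prod_{i=1}^m(1+f(x_i,y_j))\,\dd\mu^k(\vect y)$. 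The identity $1+\zeta(x_i,Y) = \prod_{j=1}^k(1+f(x_i,y_j))$ for $Y=(y_1,\ldots,y_k)\in\mathbb X_s^k$, combined with the definition~\eqref{eq:nudef} of $\nu$, rewrites the exponent as $\int_\mathbb Y\prod_{i=1}^m(1+\zeta(x_i,Y))\,\dd\nu(Y)$, producing~\eqref{eq:bz}.

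The hard part will be bookkeeping absolute convergence throughout, so that Fubini justifies every interchange of sums and integrals -- for signed or complex $\mu$ the original series is a priori only a formal expression. All required estimates are controlled by $|1+f|\leq 1$ together with Lemma~\ref{lem:well-defined}. For the last claim, that $\int_\mathbb Y\prod_{i=1}^m(1+\zeta(x_i,Y))\,\dd\nu(Y)$ is absolutely convergent, I would expand $\prod_i(1+\zeta(x_i,Y))-1 = \sum_{\emptyset\neq I\subseteq[m]}\prod_{i\in I}\zeta(x_i,Y)$, bound each summand via Lemma~\ref{lem:well-defined}, and add the finite constant $|\nu|(\mathbb Y)<\infty$ produced by~\eqref{eq:zxsnu} under hypothesis~\eqref{eq:cfinitevol}.
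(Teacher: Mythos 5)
Your proposal is correct and follows essentially the same route as the paper's proof: split the configuration integral by the bipartite structure, identify the inner sum as $Z_{\mathbb X_s}^{\vect x}$ with the tilted measure $\dd\tilde\mu_{\vect x}(y)=\prod_i(1+f(x_i,y))\,\dd\mu(y)$, apply~\cite[Theorem 1]{ueltschi2004cluster} using $0\leq 1+f\leq 1$ to keep condition~\eqref{eq:cconv}, and rewrite the resulting Ursell series via the definitions of $\zeta$ and $\nu$. Your explicit remark that the constant term of $\prod_i(1+\zeta(x_i,Y))$ needs $|\nu|(\mathbb Y)<\infty$, hence~\eqref{eq:cfinitevol}, is if anything slightly more careful than the paper, which invokes that hypothesis only later in Proposition~\ref{prop:multibody}.
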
 

\noindent The right-hand side~\eqref{eq:bz} is a ``mixed'' partition function similar to Eq.~(14) in~\cite{bovier-zahradnik2000}.

\begin{proof} 
Starting from the definition~\eqref{eq:zx} of $Z_\mathbb X$, we have 
\be
	Z_\mathbb X = Z_{\mathbb X_s\cup \mathbb X_\ell} = Z_{\mathbb X_s} + \sum_{m=1}^\infty \frac{1}{m!}  \sum_{k=0}^\infty \frac{1}{k!} \int_{\mathbb X_\ell^m\times \mathbb X_s^k} \prod_{1\leq i<j\leq m+k} (1+f(x_i,x_j)) \dd \mu^{n+k}(\vect x)
\ee
which we may rewrite as 
\be \label{eq:zxs}
	Z_{\mathbb X} = Z_{\mathbb X_s} + \sum_{m=1}^\infty \frac{1}{m!}  \int_{\mathbb X_\ell^m}  \prod_{1\leq i<j\leq m} (1+f(x_i,x_j)) 	Z_{\mathbb X_s}^ {\vect x}  \dd\mu^m(\vect x),
\ee
where 
\be
	Z_{\mathbb X_s}^ {\vect x} = \sum_{k=0}^\infty \frac{1}{k!} \int_{\mathbb X_s^k} \prod_{\substack{1\leq i\leq m\\1\leq j\leq k}}\bigl(1+ f(x_i,y_j)\bigr) \prod_{1\leq i<j\leq k} (1+f(y_i,y_j)) \dd \mu^k(\vect y)
\ee
 is the partition function for small objects in the presence of $m$ large objects pinned at the positions $x_1,\ldots, x_m$. It is of the form~\eqref{eq:zx} with $\mathbb X$ replaced by $\mathbb X_s$ and $\dd \mu(x)$ by 
\be
	 \dd \tilde \mu_{\vect x}(y) = \prod_{i=1}^m (1+ f(x_i,y)) \dd \mu(y).
 \ee
Because of $1+f(x_i,y)\leq 1$ by the assumption of non-negative interactions, condition~\eqref{eq:cconv} stays true if we replace $\mu$ with $\tilde \mu_{\vect x}$. Theorem~1 in~\cite{ueltschi2004cluster} then shows that 
\be
	\log Z_{\mathbb X_s}^ {\vect x}  = \sum_{k=1}^\infty \frac{1}{k!}\int_{\mathbb X_s^k} \varphi^\mathsf T(y_1,\ldots, y_k) \prod_{\substack{1\leq i \leq m\\ 1\leq j \leq k}} (1+ f(x_i,y_j)) \dd \mu^k(\vect{y})
\ee
with absolutely convergent series. The definitions of $\zeta$ and $\nu$ yield
\be
	\log Z_{\mathbb X_s}^ {\vect x}  =  \int_{\mathcal Y}  \prod_{i=1}^m \bigl(1+\zeta(x_i,Y)\bigr) \dd \nu(Y).
\ee
The lemma now follows from~\eqref{eq:zxs}. 
\end{proof}

\begin{proof} [Proof of Proposition~\ref{prop:multibody}] 
The proposition is a consequence of Lemma~\ref{lem:bz} and the cluster expansion~\eqref{eq:zxsnu} for $\log Z_{\mathbb X_s}$.  
	First we note that conditions~\eqref{eq:cconv},  \eqref{eq:cfinitevol}, and~\cite[Theorem 1]{ueltschi2004cluster}  ensure the absolute convergence of the expansion~\eqref{eq:zxsnu} for $\log Z_{\mathbb X_s}$ as well as $|\nu|(\mathbb Y) <\infty$. 	
	 Since $|\zeta(x,Y)|\leq 1$ for all $x,Y$, we deduce that $W_{\#J}$ in~\eqref{eq:wj} is given by a convergent integral, 
	 \be
	 	\bigl|W_{\#J}(\vect x_J) \bigr|\leq  \int_\mathbb{Y}  \prod_{j\in J}| \zeta(x_j,Y)| \dd|\nu|( Y) \leq \bigl( |\nu|(\mathbb Y)\bigr)^{\#J}<\infty. 	 
	 \ee
	 The exponent in the definition~\eqref{eq:hatmudef} of $\widehat \mu$ is bounded in a similar way.  So all terms involved are well-defined, and we may rearrange the exponent in Eq.~\eqref{eq:bz} as 
	\be
		\int_{\mathbb Y} \prod_{i=1}^m \bigl(1+\zeta(x_i,Y)\bigr)\dd \nu(Y) = \log Z_{\mathbb X_s} + \sum_{i=1}^n \int_{\mathbb Y} \zeta(x_i,Y)\dd \nu(Y) -\sum_{\heap{J\subset [m]}{\#J\geq 2}} W_{\#J}(\vect x_J),
	\ee
	in analogy with~\eqref{eq:is-inclusion-exclusion}. 
	The proposition follows by dividing both sides in~\eqref{eq:bz} by $Z_{\mathbb X_s}$. 
\end{proof}

\section{Tree-graph inequalities. Proof of Theorem~\ref{thm1}} \label{sec:newtreegraphs}

The idea for the proof of Theorem~\ref{thm1} is to use tree-graph inequalities in combination with a clever choice of partition scheme that takes into account the asymmetry between large objects  and chains of small objects. In the following we refer to large objects as \emph{stars} and chains of small objects as \emph{clouds}.

\begin{definition}[Total order on edges]
	Let $n\in \N$. A total order $\prec$ on the set of edges $\{i,j\}$, $1\leq i<j\leq n$, of the complete graph on $[n]$ is defined as follows: Let $e,e'$ be two edges. Write $e = \{i,j\}$ and $e'=\{i',j'\}$ with $i<j$ and $i'<j'$. Then $e\prec e'$ if and only if either $j'<j$ or $j=j'$ and $i'<i$. 	
\end{definition} 

If we think of each edge $e$ as a two-letter word $ji$, with the order of letters chosen as $j>i$, and an alphabetic ordering of letters such that $n$ precedes $n-1$, etc.,  then the order defined above is a lexicographic order---words are first ordered according to the alphabetical order of their first letter, and then within groups with a common first letter. Thus
\be \label{order} 
	\{n,n\} \prec \{n,n-1\}\prec \cdots \prec \{n,1\} \prec \cdots \prec \{3,3\}\prec \{3,2\}\prec\{3,1\}\prec \{2,2\}\prec\{2,1\} \prec \{1,1\}.  
\ee
We are interested in $n=m+r$ with $m,r\in \N$ and attribute vertices $s\in \{1,\ldots,m\}$ to stars (large objects) and vertices $k\in \{m+1,\ldots,m+r\}$ to clouds (chains of small objects). A key feature of the order~\eqref{order} is that edges that link a cloud to a star are listed before edges that link two stars. 

\begin{definition} [Choice of partition scheme]
	Fix $n\in \N$. For $\tau \in \mathcal T_n$, let $E'(\tau)$ be the collection of edges $\{i,j\}$, $1\leq i <j \leq n$, such that $\{i,j\}\notin E(\tau)$ and every edge $e\in E(\tau)$ on the unique path connecting $i$ to $j$ in $\tau$ is smaller than $\{i,j\}$, i.e., $\{i,j\}\prec e$. Define $R(\tau)\in \mathcal C_n$ as the graph with vertex set $\{1,\ldots,n\}$ and edge set $E(\tau)\cup E'(\tau)$. 
\end{definition} 

It is known that the map $R:\mathcal T_n\to \mathcal C_n$ defined above defines a \emph{tree partition scheme}, i.e., the ``intervals''
\be
	[\tau, R(\tau)]:=\{ \gamma\in \mathcal C_n\mid E(\tau)\subset E(\gamma)\subset E(R(\tau))\}, \quad \tau \in \mathcal T_n
\ee
form a set partition of the connected graphs $\mathcal C_n$. For $\gamma\in \mathcal C_n$, the unique tree $\tau \in \mathcal C_n$ with $\gamma\in [\tau, R(\tau)]$ is reconstructed from $\gamma$ by going through the edges of $\gamma$ in increasing order, keeping an edge if it does not create a loop, discarding it if it does (i.e., following Kruskal's algorithm). 

\begin{lemma} \label{lem:crucialstar}
	Let $m,r\in \N$, $\tau \in \mathcal T_{m+r}$ a tree, $k\in \{m+1,\ldots,m+r\}$ a cloud and $s,s'\in \{1,\ldots,m\}$ two distinct stars not directly linked in $\tau$, i.e., $\{s,s'\}\notin E(\tau)$. 
	\begin{enumerate} 
		\item [(a)]	Assume that the cloud $k$ is directly linked to both stars, i.e., $\{k,s\}$ and $\{k,s'\}\in E(\tau)$. Then the edge $\{s,s'\}$ is necessarily in $E'(\tau)$. 
		\item [(b)] Assume that $\{k,s\}\in E(\tau)$ and $\{k,s'\}\in E'(\tau)$. Then necessarily $\{s,s'\}\in E'(\tau)$. 
		\item [(c)] If $\{k,s\}$ and $\{k,s'\}$ are both in $E'(\tau)$, then so is $\{s,s'\}$. 
	\end{enumerate} 
\end{lemma}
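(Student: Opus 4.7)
My plan is to exploit a single structural fact about the total order $\prec$: every edge $\{s,k\}$ joining a star $s\in\{1,\ldots,m\}$ to a cloud $k\in\{m+1,\ldots,m+r\}$ is strictly $\prec$-smaller than every edge $\{s,s'\}$ between two stars, because $k>m\geq \max(s,s')$ makes the max-vertex of $\{s,k\}$ strictly larger and hence $\{s,k\}$ appears earlier in the listing \eqref{order}. Combined with the defining property of $E'(\tau)$ (every edge on the unique $\tau$-path joining the two endpoints of $\{i,j\}$ must be $\prec\{i,j\}$) and with transitivity of $\prec$, this single observation drives all three parts.

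Part (a) is immediate: since both $\{k,s\}$ and $\{k,s'\}$ lie in $E(\tau)$, the unique $\tau$-path from $s$ to $s'$ is the two-edge path $s$--$k$--$s'$. Both edges have max-vertex $k>m\geq \max(s,s')$, so both are $\prec\{s,s'\}$; together with the hypothesis $\{s,s'\}\notin E(\tau)$ this gives $\{s,s'\}\in E'(\tau)$.

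For parts (b) and (c) I would first record the elementary tree fact that for any three vertices $s, s', k$ in $\tau$ there is a unique median vertex where the three pairwise $\tau$-paths meet, so that every edge on $\mathrm{path}_\tau(s,s')$ lies either on $P_1=\mathrm{path}_\tau(k,s)$ or on $P_2=\mathrm{path}_\tau(k,s')$. For (c), the defining property of $E'(\tau)$ forces every edge on $P_1$ to be $\prec\{k,s\}$ and every edge on $P_2$ to be $\prec\{k,s'\}$; the opening observation gives $\{k,s\}\prec\{s,s'\}$ and $\{k,s'\}\prec\{s,s'\}$; by transitivity every edge on $\mathrm{path}_\tau(s,s')$ is $\prec\{s,s'\}$, and combined with $\{s,s'\}\notin E(\tau)$ this yields $\{s,s'\}\in E'(\tau)$. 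Part (b) is the same argument with $P_1=\{\{k,s\}\}$, which is itself $\prec\{s,s'\}$ directly.

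I expect no substantive obstacle. The only bookkeeping point worth flagging arises in (b): if $s$ happens to lie on $\mathrm{path}_\tau(k,s')$, the median is $s$ itself and $\mathrm{path}_\tau(s,s')$ is a genuine sub-path of $P_2$; every such edge is still $\prec\{k,s'\}\prec\{s,s'\}$, so the conclusion is unaffected.
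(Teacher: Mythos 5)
Your proof is correct and follows essentially the same route as the paper's: the key observation that every star--cloud edge precedes every star--star edge in the order \eqref{order} (because its maximal vertex exceeds $m$), combined with the fact that the $\tau$-path from $s$ to $s'$ is contained in the union of the paths from $k$ to $s$ and from $k$ to $s'$. Your explicit treatment of the degenerate case in (b), where $s$ lies on $\mathrm{path}_\tau(k,s')$, is slightly more careful than the paper's phrasing, which asserts that the $s$--$s'$ path is the edge $\{s,k\}$ followed by the $k$--$s'$ path; the conclusion is unaffected either way.
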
 

\begin{proof} 
	(a) The path linking $s$ to $s'$ consists of the two edges $\{s,k\}$ and $\{k,s'\}$. Therefore 
	by definition of the partition scheme $R$, the edge $\{s,s'\}$ is in $R(\tau)$ if and only if
	$\{k,s\}\prec\{s,s'\}$ and $\{k,s'\}\prec \{s,s'\}$.  Now, because of $k>m \geq \max (s,s')$, the edges $\{k,s\}$ and $\{k,s'\}$ are listed in~\eqref{order} before $\{s,s'\}$, and so the required ordering does indeed hold true. 
	
	(b) If $\{k,s'\}$ is in $E'(\tau)$, then by definition of $E'(\tau)$, any edge $e$ on the path connecting $k$ to $s$ is smaller than $\{k,s\}$, i.e., $e\prec \{k,s\}$. 	
	From the definition of the total order of edges, we also know that $\{k,s'\}\prec \{s,s'\}$. Consequently every edge $e$ on the path connecting $k$ to $s$ is also smaller than $\{s,s'\}$. 
	We also have $\{k,s\}\prec \{s,s'\}$. Since the path connecting $s$ to $s'$ consists of the edge $\{s,k\}$ together with the path connecting $k$ to $s'$, we conclude that $\{s,s'\}\in E'(\tau)$. 
	
	(c) Note $\{k,s\}\prec \{s,s'\}$ and $\{k,s'\}\prec \{s,s'\}$. The path connecting $s$ to $s'$ in $E'(\tau)$ consists of the edges on the path connecting $s$ to $k$ and the edges from the path connecting $s'$ to $k$. Any such edge is either smaller than $\{k,s\}$ or smaller than $\{k,s'\}$, so in any case smaller than $\{s,s'\}$. Therefore $\{s,s'\}\in E'(\tau)$. 
\end{proof}

\begin{prop}\label{prop1}
Let $-1\leq \zeta(x,y)\leq 0$ and $-1\leq f(x,x')\leq 0$, $x,x'\in \mathbb X_\ell$, $y\in\mathbb Y$.
We have:
\begin{equation}\label{tg}
\left|
\varphi^\mathsf T_*(x_1,\ldots,x_m; Y_{m+1},\ldots,Y_{m+r})
\right|\leq
\sum_{\tau \in \mathcal T^*_{m+r}} \Bigl( \prod_{\heap{1\leq i<j\leq m}{\{i,j\}\in E(\tau)}} |f(x_i,x_j)|\Bigr) 	
			\Bigl( \prod_{\heap{1\leq i\leq m<j\leq m+r}{\{i,j\}\in E(\tau)}} 
			|\tilde\zeta(x_i,Y_j)|\Bigr) 
\end{equation}
where $\tilde\zeta$ satisfies Assumption~\ref{ass:tildezeta}.
\end{prop}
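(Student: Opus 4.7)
The plan is to apply the Penrose tree-graph partition scheme, using the total order~\eqref{order} and the map $R:\mathcal T_{m+r}\to \mathcal C_{m+r}$ defined above. Applied to $\sum_{\gamma \in \mathcal C^*_{m,r}} w_{m,r}(\gamma)$, the partition $\mathcal C_{m+r}=\bigsqcup_\tau [\tau,R(\tau)]$ localizes the sum to bipartite trees $\tau$ (those without cloud--cloud edges), since $\tau\subseteq\gamma$ and $\gamma$ has no such edges. For each such $\tau$, the inner sum factorizes as $\prod_{e\in E(\tau)} w_e\cdot \Sigma_\tau$, where $\Sigma_\tau$ is a weighted sum over subsets of the extra edges $E'(\tau)$ compatible with the constraint that every cloud has degree $\geq 2$ in $\gamma$. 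Evaluating this constraint cloud by cloud and setting $T_k=\{s':\{k,s'\}\in E'(\tau)\}$, one obtains $\Sigma_\tau= \prod_{\{i,j\}\in E'(\tau),\,\text{ss}} (1+f(x_i,x_j)) \cdot \prod_{k} \sigma_k$, with $\sigma_k=\prod_{s'\in T_k}(1+\zeta(x_{s'},Y_k))$ if cloud $k$ has tree-degree $\geq 2$ and $\sigma_k=\prod_{s'\in T_k}(1+\zeta(x_{s'},Y_k))-1$ if it has tree-degree $1$.

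The heart of the argument is to convert the $\zeta$-factors on tree star--cloud edges into $\tilde\zeta$-factors via Lemma~\ref{lem:crucialstar} and Assumption~\ref{ass:tildezeta}. For each cloud $k$, let $N(k)$ denote its full set of star-neighbors in $R(\tau)$: the tree-neighbors together with $T_k$. Parts (a)--(c) of Lemma~\ref{lem:crucialstar} together say that every star-pair $\{s,s'\}\subseteq N(k)$ lies in $E'(\tau)$, so on any non-vanishing contribution all of the factors $(1+f(x_s,x_{s'}))$ with $s,s'\in N(k)$ are strictly positive. For a cloud of tree-degree $\geq 2$ I would apply the ``divided'' form of~\eqref{tildezeta1} (with the role of ``$x_1,\ldots,x_k$'' played by the tree-neighbors of $k$) to replace $\prod_i|\zeta(x_{s_{k,i}},Y_k)|$ by $\prod_i|\tilde\zeta(x_{s_{k,i}},Y_k)|$. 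For a cloud of tree-degree $1$ with unique tree-neighbor $s_{k,1}$ I would apply the divided form of~\eqref{tildezeta2} (with ``$x_1$''$=s_{k,1}$ and ``$x_2,\ldots,x_k$''$=T_k$) to replace $|\zeta(x_{s_{k,1}},Y_k)\,\sigma_k|$ by $|\tilde\zeta(x_{s_{k,1}},Y_k)|$. Because only divided forms are used, no $(1+f)$ factor has to be ``consumed'' and the cloud-wise substitutions do not interact.

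After these substitutions, the residual factors in $\Sigma_\tau$ all lie in $[0,1]$ and can be dropped, yielding the tree-wise bound $\left|\prod_{e\in E(\tau)}w_e\cdot \Sigma_\tau\right|\leq \prod_{\text{ss tree edges}}|f(x_i,x_j)|\prod_{\text{sc tree edges}}|\tilde\zeta(x_i,Y_j)|$. Summing over bipartite trees $\tau$ produces the right-hand side of~\eqref{tg}. I expect the main obstacle to be the tree-degree-$1$ case: one must correctly identify $\sigma_k$ as precisely the combination $\prod(1+\zeta)-1$ appearing on the left-hand side of~\eqref{tildezeta2}, and check via Lemma~\ref{lem:crucialstar}(b),(c) that the $(1+f)$ factors needed to ``divide'' through in~\eqref{tildezeta2} are indeed present in $E'(\tau)$. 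The form of Assumption~\ref{ass:tildezeta}~\eqref{tildezeta2} appears tailored precisely to this situation.
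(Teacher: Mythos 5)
Your proposal follows essentially the same route as the paper's proof: the same Penrose/Kruskal partition scheme with cloud--star edges prioritized, the same factorization over $E'(\tau)$ with the degree-$\geq 2$ constraint evaluated cloud by cloud, Lemma~\ref{lem:crucialstar} to guarantee the presence of the $(1+f)$ factors among each cloud's star-neighbours, and then~\eqref{tildezeta1} for clouds of tree-degree $\geq 2$ and~\eqref{tildezeta2} for leaf clouds, followed by bounding the residual $[0,1]$-valued factors by $1$. Your ``divided form'' phrasing is just an equivalent way of saying that the $(1+f)$ factors appear on both sides of~\eqref{tildezeta1}--\eqref{tildezeta2} and hence are not consumed, so the cloud-wise substitutions can be performed independently, exactly as in the paper.
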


\begin{proof}
Fix $m,r\in \N$. Remember the graphs $\mathcal C^*_{m,r}\subset \mathcal C_{m+r}$ and the trees 
$\mathcal T^*_{m,r}\subset \mathcal T_{m+r}$ from Definition~\ref{def:cstargraphs}. 
Starting from the definition~\eqref{phistar} of $\varphi_*^\mathsf T$, we have for all $\vect x\in \mathbb X_\ell^m$ and $\vect Y\in \mathbb Y^r$,
\be
	\varphi_*^\mathsf T(\vect x;\vect Y) = \sum_{\gamma \in \mathcal C^*_{m,r}} w(\gamma;\vect x, \vect Y)
		 = \sum_{\tau \in \mathcal T^*_{m+r}}\sum_{\substack{\gamma\in \mathcal C^*_{m,r}: \\
			\gamma\in [\tau,R(\tau)]}} w(\gamma;\vect x, \vect Y).
\ee
Summing over graphs $\gamma\in [\tau, R(\tau)]\cap \mathcal C^*_{m,r}$ amounts to summing over subsets $E'\subset E'(\tau)$, with the understanding that $E(\gamma) = E(\tau)\cup E'$. Any choice of $E'\subset E'(\tau)$ results in a graph $\gamma \in [\tau, R(\tau)]\subset \mathcal C_{m+r}$, however an additional constraint is needed to ensure that $\gamma\in \mathcal C^*_{m,r}$: we need to enforce that every cloud $k\in \{m+1,\ldots,m+r\}$ is directly linked in $\gamma$ to at least two distinct stars $s,t\in \{1,\ldots,m\}$. If a cloud $k$ is already linked to two stars in the tree $\tau$, then we may freely add or not add edges $\{k,t\}\in E'(\tau)$. 
If on the other hand a cloud $k$ is a leaf of the tree, then we need to add at least one edge $\{k,t\}\in E'(\tau)$. Altogether we find 
\begin{align}
	\sum_{\gamma \in \mathcal C^*_{m,r}} w(\gamma;\vect x, \vect Y)
		& =\sum_{\tau \in \mathcal T^*_{m+r}}
			\Biggl( \prod_{\heap{1\leq s<t\leq m}{\{s,t\}\in E(\tau)}} f(x_s,x_t)\Biggr) 	
			\Biggl( \prod_{\heap{1\leq s\leq m<k\leq m+r}{\{s,k\}\in E(\tau)}} 
			\zeta(x_s,Y_k)\Biggr) 
			\nonumber\\
	&\qquad \times
			\Biggl( \prod_{\substack{\{s,t\}\in E'(\tau)\\ 1\leq s< t\leq m}}(1+f(x_s,x_t))\Biggr)
			\prod_{k=m+1}^{m+r}\left[
			\prod_{\substack{t\leq m\\ \{k,t\}\in E'(\tau)}}
			(1+\zeta(x_{t},Y_k))-\1_{\mathcal L(\tau)}(k)
			\right],\label{tg1}
\end{align}
where $\mathcal L(\tau)$ is the set of leaves of $\tau$. 

Standard procedure would have us bound the absolute value of the contribution of edges in $E'(\tau)$, i.e., the second line in~\eqref{tg1}, by~$1$, leading to a bound with the original $\zeta$-functions. Instead we want to replace $\zeta$-terms in the first line of~\eqref{tg1} by $\tilde \zeta$-terms using the assumptions~\eqref{tildezeta1} and~\eqref{tildezeta2} \emph{before} bounding $|1+f|\leq 1$. 

Fix a cloud $k\in \{m+1,\ldots,m+r\}$. Consider first the case that the cloud is linked in $\tau$ to exactly two distinct stars $s,t\leq m$, then by Lemma~\ref{lem:crucialstar}, the edge $\{s,t\}$ belongs to $E'(\tau)$. Therefore the term $1+f(x_s,x_t)$ appears in the second line in~\eqref{tg1}. We combine it with the two terms $\zeta(x_s,Y_k)$ and $\zeta(x_t,Y_k)$ in the first line of~\eqref{tg1}, apply~\eqref{tildezeta1}, and find
\be
	\bigl| (1+f(x_s,x_t)) \zeta(x_s,Y_k)\zeta(x_t,Y_k)\bigr|\leq 	\bigl| (1+f(x_s,x_t)) \tilde \zeta(x_s,Y_k)\tilde \zeta(x_t,Y_k)\bigr|.
\ee
A similar estimate holds true if the cloud $k$ is linked in $\tau$ to three stars or more. The same cloud $k$ also appears in the second line of~\eqref{tg1}; here we simply use the bound 
\be
	0\leq \prod_{\substack{s\leq m:\\ \{k,s\}\in E'(\tau)}}
			(1+\zeta(x_s,Y_k))\leq 1,
\ee
where the non-negativity of the pair potential enters in the form $-1\leq \zeta \leq 0$. 

Next consider the case that the cloud $k$ is a leaf in $\tau$. Then it is linked in $\tau$ to a unique star $s\leq m$. Let 
\be
	\mathcal S_s:=\{ t\in \{1,\ldots,m\} \setminus\{s\} \mid \{k,t\}\in E'(\tau)\}.
\ee
The square bracket in the second line in~\eqref{tg1} becomes 
\be \label{eq:treeskeys}
	\prod_{\substack{t\leq m\\ \{k,t\}\in E'(\tau)}}
		(1+\zeta(x_{t},Y_k))-\1_{\mathcal L(\tau)}(k) 
		= \prod_{t\in \mathcal S_s} (1+\zeta(x_{t},Y_k))-\1_{\mathcal L(\tau)}(k).
\ee 
If $\mathcal S_s$ is empty, i.e., if the set $E'(\tau)$ does not link the cloud to another star $t$, then the right-hand side of ~\eqref{eq:treeskeys} equals $1-1=0$, hence the square bracket in the second line in~\eqref{tg1} vanishes and the tree does not contribute to the sum. If $\mathcal S_s$ is not empty, i.e., if  the set $E'(\tau)$ contains an edge $\{k,t\}$ with $t \leq m$ a star distinct from $s$, then by Lemma~\ref{lem:crucialstar}(b), we must have $\{t,s\} \in E(\tau)$. 
Therefore the term $1+ f(x_s,x_t)$ appears in the first product in the second line of~\eqref{tg1}. Similarly, if $t,t'$ are two distinct stars with $\{k,t\}\in E'(\tau)$ and $\{k,t'\}\in E'(\tau)$, then $\{t,t'\}\in E'(\tau)$ by Lemma~\ref{lem:crucialstar}(c), and the term $1+f(x_t,x_{t'})$ appears in the second line of~\eqref{tg1}.  Eq.~\eqref{tildezeta2} yields 
\begin{multline} 
	\Biggl| \zeta(x_s,Y)\prod_{\substack{1\leq i<j \leq m:\\ i,j\in \{s\}\cup \mathcal S_s}}\bigl( 1+ f(x_i,x_j)\bigr)  \Biggl[ \prod_{\substack{t\leq m:\\  \{k,t\}\in E'(\tau)}}  \bigl( 1 +\zeta(x_t,Y_k)\bigr) -1 \Biggr] \Biggr| \\
		\leq \Biggl| \tilde \zeta(x_s,Y) \prod_{\substack{1\leq i<j \leq m:\\ i,j\in \{s\}\cup \mathcal S_s}}\bigl( 1+ f(x_i,x_j)\bigr) \Biggr|.
\end{multline} 
Combining the above considerations, we obtain 
\begin{multline}
	\Biggl| \sum_{\gamma \in \mathcal C^*_{m,r}} w(\gamma;\vect x, \vect Y)\Biggr|
		 \leq \sum_{\tau \in \mathcal T^*_{m+r}}
			\Biggl| \Biggl( \prod_{\heap{1\leq s<t\leq m}{\{s,t\}\in E(\tau)}} f(x_s,x_t)\Biggr) 	
			\Biggl( \prod_{\heap{1\leq s\leq m<k\leq m+r}{\{s,k\}\in E(\tau)}} 
			\tilde \zeta(x_s,Y_k)\Biggr) \Biggr|\\
		 \times
			\Biggl( \prod_{\substack{\{s,t\}\in E'(\tau)\\ 1\leq s< t\leq m}}(1+f(x_s,x_t))\Biggr).
\end{multline}
To conclude, we use the bound $|1+f(x_i,x_j)|\leq 1$, for $1\leq i < j \leq m$.
\end{proof}

Now we are ready to give the proof of the key convergence estimates. 

\begin{proof}[Proof of Theorem~\ref{thm1}]
In order to prove the bound~\eqref{mainest1} we use the tree-graph inequality from Proposition~\ref{prop1} and find that the left-hand side of \eqref{mainest1} is bounded by:
\begin{equation}\label{pf1}
\sum_{m=2}^\infty \frac{1}{(m-1)!}\int_{\mathbb X_\ell^m}
\sum_{r=0}^\infty \frac{1}{r!} \int_{\mathbb{Y}^r}
\sum_{\tau \in \mathcal T^*_{m,r}}  \prod_{\heap{1\leq i<j\leq m}{\{i,j\}\in E(\tau)}} |f(x_i,x_j)|
			 \prod_{\heap{1\leq i\leq m<j\leq m+r}{\{i,j\}\in E(\tau)}} 
			|\tilde\zeta(x_i,Y_j)|
			\dd |\nu^{r}|(\vect{Y})
\prod_{i=2}^m\dd |\widehat\mu|(x_i).
\end{equation}
For an upper bound, we may go from summation over trees $\tau \in \mathcal T_{m,r}^*$ to summation over trees $\tau \in \mathcal T_{m+r}$. We introduce the abstract polymer space 
$\mathcal P:=\mathbb X_\ell\sqcup \mathbb Y$,
with measure 
\begin{equation}\label{lambda}
	\lambda:=\widehat\mu\oplus\nu
\end{equation}
and
weight function $h:\mathcal P\times\mathcal P\to\mathbb R$ given by
\begin{equation}\label{weight}
	h(P_1,P_2):=
	\begin{cases}
		f(P_1,P_2) & \, P_1,P_2\in \mathbb X_\ell,\\
\tilde\zeta(P_1,P_2) & \, P_1\in \mathbb X_\ell, P_2\in \mathbb Y ,\\
\tilde\zeta(P_2,P_1) & \, P_2\in \mathbb X_\ell, P_1\in \mathbb Y ,\\
0 & \, P_1,P_2\in\mathbb Y.
	\end{cases}
\end{equation}
With this notation, and after substitution of $\mathcal T_{m+r}$ for $\mathcal T^*_{m,r}$, the expression \eqref{pf1} is equal to
\begin{equation}\label{pf2}
	\sum_{n=1}^\infty\frac{1}{(n-1)!}\int_{\mathcal P^{n-1}}\dd |\lambda|(P_2) \cdots \dd |\lambda|(P_n)
\sum_{\tau\in\mathcal T_n}
\prod_{\{i,j\}\in E(\tau)} |h(P_i,P_j)|,
\end{equation}
where $P_1=x_1$.
Similarly, the left-hand side of the inequality~\eqref{mainest2} 
is bounded by  \eqref{pf2} with $P_1=Y_1$.
Proceeding as in the proof of Theorem 2.1 in \cite{poghosyan-ueltschi2009} we obtain that \eqref{pf2}
is bounded by $\e^{a(x_1)}-1$, if $P_1=x_1\in \mathbb X_\ell$, or $\e^{b(Y_1)}-1$, if $P_1=Y_1\in\mathbb Y$.
\end{proof}

\section{Application to penetrable hard spheres}\label{sec:appli-colloids}

In Theorem~\ref{thm1} we have presented new sufficient conditions for the convergence of the cluster expansion.
In order to better appreciate the implied gain we investigate it in the simplest possible model, namely the  penetrable hard spheres presented in Section~\ref{sec:colloids}, where the small spheres  do not interact with each other.

Remember the effective interactions $W_{\#J}$ from~\eqref{eq:wcol} 
\be \label{eq:bmdef}
	b_m(z_r):= \int_{(\R^3)^{m-1}} \psi^\mathsf T(0,x_2,\ldots,x_m) \,\dd x_2\cdots \dd x_m,
\ee
where $\psi^\mathsf T$ is given in \eqref{eq:psit} from which one can also trace back the dependence on $z_r$.
Remember also that $\widehat z_R = \widehat z_R(z_R,z_r) = z_R\exp( - z_r |B(0,R+r)|)$. 

\begin{theorem}  \label{thm:col1} 
Assume that the activities $z_r, z_R>0$ and $\widehat z_R = z_R \exp( - z_r|B(0,R+r)|)$ satisfy 
\begin{align}
	|B(0,2R)| \e^A \widehat z_R +  |B(0,R+r)\setminus B(0,R-r)|\e^a z_r&	\leq A \label{eq:csuff1} \\
						 |B(0, R+r) \setminus B(0,R-r)|\e^A \widehat z_R &\leq a \label{eq:csuff2}
\end{align} 
for some numbers $a,A> 0$. Then 
\be \label{eq:col11}
	p(z_R,z_r) = z_r + \sum_{m=2}^\infty b_m(z_r) {\widehat z}_R^m 
	= z_r + \sum_{m=2}^\infty b_m(z_r) \bigl( z_R\, \e^{- z_r |B(0,R+r)|}\bigr)^m
\ee
with 
\be \label{eq:col12} 
	\sum_{m=1}^\infty| m b_m(z_r)| \widehat z_R ^m \leq \e^A\widehat z_R <\infty.
\ee
\end{theorem}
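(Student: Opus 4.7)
The plan is to apply Theorem~\ref{thm2} to the penetrable hard spheres setup with a carefully chosen majorant $\tilde\zeta$ and constant functions $a(\cdot)\equiv A$, $b(\cdot)\equiv a$, and then pass to the thermodynamic limit using translation invariance.

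First, I would collapse the cloud space. Because small spheres do not interact in the Asakura-Oosawa model, $f(y,y')=0$ for $y,y'\in \mathbb X_s$, hence $\varphi^\mathsf T(y_1,\ldots,y_k)=0$ for $k\geq 2$ and $\varphi^\mathsf T(y)=1$. Therefore $\nu$ is concentrated on chains of length one and reduces to $z_r\dd y$ on $\mathbb X_s$; in particular $Z_{\mathbb X_s}=\e^{z_r|\Lambda|}$ and $W_{\#J}$ takes the explicit form \eqref{eq:wcol}. Next, following Example~\ref{ex1}, I would take $\tilde\zeta(x,y):=-\1_{\{R-r<|x-y|<R+r\}}$. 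Assumption~\ref{ass:tildezeta} for general $k$ then follows from a single triangle-inequality observation: if $|x_i-x_j|\geq 2R$ for all $i\neq j$ and $|x_\ell-y|<R+r$ for some $\ell$, then for any other $i$, $|x_i-y|\geq |x_i-x_\ell|-|x_\ell-y|>2R-(R+r)=R-r$, so the factor $|\zeta(x_i,y)|=\1_{\{|x_i-y|<R+r\}}$ may be replaced by $|\tilde\zeta(x_i,y)|$ without loss.

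Second, with $a(\cdot)\equiv A$ and $b(\cdot)\equiv a$, translation invariance evaluates the three integrals in \eqref{suff1}--\eqref{suff2} as
\begin{equation*}
\int_{\mathbb Y}|\tilde\zeta(x,Y')|\e^{b(Y')}\dd|\nu|(Y')=z_r\e^a\,|B(0,R+r)\setminus B(0,R-r)|,
\end{equation*}
\begin{equation*}
\int_{\mathbb X_\ell}|f(x,x')|\e^{a(x')}\dd|\widehat\mu|(x')=\widehat z_R\,\e^A\,|B(0,2R)|,
\end{equation*}
\begin{equation*}
\int_{\mathbb X_\ell}|\tilde\zeta(x',Y)|\e^{a(x')}\dd|\widehat\mu|(x')=\widehat z_R\,\e^A\,|B(0,R+r)\setminus B(0,R-r)|.
\end{equation*}
Conditions \eqref{suff1}--\eqref{suff2} thus reduce \emph{exactly} to the hypotheses \eqref{eq:csuff1}--\eqref{eq:csuff2}. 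Hence Theorem~\ref{thm2} applies: estimate \eqref{thirda} gives, for any $x_1\in\Lambda$,
\begin{equation*}
\sum_{m=2}^\infty\frac{\widehat z_R^{m-1}}{(m-1)!}\int_{\Lambda^{m-1}}|\psi^\mathsf T(x_1,x_2,\ldots,x_m)|\,\dd x_2\cdots\dd x_m\leq \e^A-1,
\end{equation*}
and \eqref{thirdb} gives the convergent series representation of $\log(Z_{\mathbb X_s\cup\mathbb X_\ell}/Z_{\mathbb X_s})$.

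Third, I would pass to the thermodynamic limit. Since $\psi^\mathsf T$ is translation invariant, $\int_{\Lambda^m}\psi^\mathsf T(x_1,\ldots,x_m)\dd\vect x=|\Lambda|\,b_m(z_r)+o(|\Lambda|)$ as $\Lambda\uparrow\R^3$, with the boundary correction controlled uniformly in $m$ by the absolute-convergence bound just established; this permits exchanging the thermodynamic limit with the sum over $m$. Dividing by $|\Lambda|$ and adding the solvent pressure $z_r=\tfrac{1}{|\Lambda|}\log Z_{\mathbb X_s}$ yields \eqref{eq:col11}. The bound \eqref{eq:col12} then follows by setting $x_1=0$ in \eqref{thirda}, using translation invariance to recognise the integral as $|b_m(z_r)|$, multiplying by $\widehat z_R$, and adding the $m=1$ contribution $b_1(z_r)\widehat z_R=\widehat z_R$ (with $\psi^\mathsf T(0)=1$), giving $\sum_{m\geq 1}|m b_m(z_r)|\widehat z_R^m\leq(\e^A-1)\widehat z_R+\widehat z_R=\e^A\widehat z_R$.

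The main obstacles are mild: the triangle-inequality verification of Assumption~\ref{ass:tildezeta} for all $k\geq 2$ (straightforward but must be done carefully since the interchange of $\zeta$ with $\tilde\zeta$ is the mechanism by which our convergence criterion improves upon the surface-over-volume Koteck{\'y}--Preiss-type bounds), and the justification that the thermodynamic limit commutes with the $m$-sum, which is standard once one has the absolute bound \eqref{thirda} uniformly in $\Lambda$.
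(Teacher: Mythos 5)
Your proposal follows essentially the same route as the paper's proof: the same choice of $\tilde\zeta(x,y)=-\1_{\{R-r<|x-y|<R+r\}}$ verified via the triangle inequality, the same constant weight functions $a(\cdot)\equiv A$ and $b(\cdot)\equiv a$ reducing \eqref{suff1}--\eqref{suff2} exactly to \eqref{eq:csuff1}--\eqref{eq:csuff2}, the same application of Theorem~\ref{thm2}, and the same passage to the thermodynamic limit controlled by \eqref{thirda}. The one point you gloss over is that with free boundary conditions the finite-volume effective activity $z_R\exp(-z_r|B(x,R+r)\cap\Lambda|)$ exceeds $\widehat z_R$ for $x$ near $\partial\Lambda$, so the hypotheses \eqref{eq:csuff1}--\eqref{eq:csuff2}, stated in terms of $\widehat z_R$, do not literally imply \eqref{suff1}--\eqref{suff2} in finite volume; the paper sidesteps this by working with periodic boundary conditions, for which the effective activity is exactly the homogeneous, $L$-independent $\widehat z_R$, and then compares $\psi^{\mathsf T}_{L,\mathrm{per}}$ with $\psi^{\mathsf T}$ up to an $O(L^2)$ boundary error.
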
 
\noindent The theorem is proven at the end of this section.

The next lemma presents an even easier sufficient convergence criterion. 

\begin{lemma} \label{cor:csuffeasy}
	For $h>0$ set $\eps(h):=\frac18[(1+h)^3-(1-h)^3]=\frac14(3h+h^3)$. If 
		\be \label{eq:csuffeasy}
				|B(0,2R)| \widehat z_R\leq \frac{1}{\mathrm e} \exp\Bigl(- z_r|B(0,R+r)\setminus B(0,R-r)| \e^{\eps(r/R)}\Bigr)
		\ee
	then the sufficient conditions of Theorem~\ref{thm:col1} are met.
\end{lemma}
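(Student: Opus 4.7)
The plan is to exhibit positive numbers $a,A$ satisfying the two sufficient conditions \eqref{eq:csuff1} and \eqref{eq:csuff2} of Theorem~\ref{thm:col1}, starting from the single scalar inequality \eqref{eq:csuffeasy}. Since \eqref{eq:csuff2} forces $a \geq S\e^A\widehat z_R$ while \eqref{eq:csuff1} benefits from $a$ being small (so that $\e^a$ stays close to $1$), I would set $a := S\e^A\widehat z_R$, making \eqref{eq:csuff2} an equality, and then tune $A$ so that \eqref{eq:csuff1} still closes.

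First I would record the purely geometric identity
\[
\frac{|B(0,R+r)\setminus B(0,R-r)|}{|B(0,2R)|} = \eps(r/R),
\]
which is immediate from expanding $(1\pm r/R)^3$. Writing $V:=|B(0,2R)|$, $S:=\eps(r/R)\,V$, and introducing the dimensionless parameter $u:=z_r S\,\e^{\eps(r/R)}$, the hypothesis \eqref{eq:csuffeasy} rewrites as $V\widehat z_R\leq \e^{-1-u}$.

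With these preparations I would take $A:=1+u$ and $a:=S\e^A\widehat z_R$. Multiplying the rewritten hypothesis by $\e^A=\e^{1+u}$ immediately gives $V\e^A\widehat z_R\leq 1$, hence $a=\eps(r/R)\cdot V\e^A\widehat z_R\leq \eps(r/R)$. Combined with the identity $Sz_r=u\,\e^{-\eps(r/R)}$, this forces $Sz_r\,\e^a\leq u$, and adding the two estimates yields $V\e^A\widehat z_R+Sz_r\,\e^a\leq 1+u=A$, which is precisely \eqref{eq:csuff1}. Condition \eqref{eq:csuff2} holds by construction, and both $a,A$ are positive.

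The only subtle step is the choice of $A$. The most natural guess $A=1$ (suggested by the factor $\e^{-1}$ in \eqref{eq:csuffeasy}) reduces \eqref{eq:csuff1}, after substituting the hypothesis, to an inequality of the shape $y+\e^{\eps(y-1)}(-\log y)\leq 1$ on $(0,1]$ with $y=V\e\widehat z_R$; a short second-order Taylor expansion at $y=1$ shows this fails whenever $\eps(r/R)<1/2$ — precisely the colloid regime $r\ll R$ of interest. Enlarging $A$ from $1$ to $1+u$ is the crucial adjustment, providing exactly the extra slack $u$ needed to absorb the depletion contribution $Sz_r\,\e^a$ without waste, and is what makes the two conditions simultaneously solvable.
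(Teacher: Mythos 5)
Your proof is correct and follows essentially the same route as the paper: both arguments rest on the geometric identity $|\partial_r B(0,R)|/|B(0,2R)|=\eps(r/R)$ and the choice $A=1+z_r|\partial_r B(0,R)|\,\e^{\eps(r/R)}$, with the hypothesis supplying exactly the factor $\e^{\eps(r/R)}$ needed to absorb $\e^{a}$. The only cosmetic difference is that the paper sets $a:=\eps(r/R)$ (so \eqref{eq:csuff2} holds with slack) whereas you set $a:=|\partial_r B(0,R)|\e^{A}\widehat z_R\leq\eps(r/R)$ (so \eqref{eq:csuff2} is an equality); both choices close the argument identically.
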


\begin{proof}
	 Set $\partial_r B(0,R):=B(0,R+r)\setminus B(0,R-r)$ and 
    $\gamma:=A-z_r|\partial_r B(0,R)|\e^a$. The inequalities~\eqref{eq:csuff1} and~\eqref{eq:csuff2} are equivalent to
    \be \label{eq:csuff3}
    	\begin{aligned}
    			|B(0,2R)|\widehat z_R \exp\Bigl(z_r|\partial_r B(0,R)|\e^a\Bigr) &\leq \gamma \e^{-\gamma},\\
    			 |\partial_r B(0,R)|\widehat z_R \exp\Bigl(z_r|\partial_r B(0,R)|\e^a\Bigr) \e^\gamma &\leq a.
    \end{aligned}			 	
    \ee
    We choose 
    \be
    	 		a:=\eps\bigl(\frac r R)=\frac{|\partial_r B(0,R)|}{|B(0,2R)|},\quad  A:=1+z_r|\partial_r B(0,R)| \e^a
    \ee
    so that $\gamma =1$. For this choice the first inequality in~\eqref{eq:csuff3} reads
    \be \label{eq:csuff4}
    	|B(0,2R)|\widehat z_R \leq \frac{1}{\mathrm e} \exp\Bigl( - z_r|\partial_r B(0,R)|\e^a\Bigr)
    \ee
    and it holds true by the assumption~\eqref{eq:csuffeasy}.   
    For the second inequality in~\eqref{eq:csuff3}, we use~\eqref{eq:csuff4} and estimate 
    \begin{multline}
    		|\partial_r B(0,R)|\widehat z_R \exp\Bigl(z_r|\partial_r B(0,R)|\e^a\Bigr) \e^\gamma \\ = \eps\bigl( \frac rR\bigr)  \Bigl(\mathrm{e} |B(0,2R)|\widehat z_R \exp\Bigl(z_r|\partial_r B(0,R)|\e^a\Bigr) \Bigr) \leq\eps\bigl( \frac rR\bigr)=a.
    \end{multline}
	So we have found that under condition~\eqref{eq:csuffeasy}, there exist $a,A\geq 0$ so that~\eqref{eq:csuff3}  and therefore also~\eqref{eq:csuff1}  and~\eqref{eq:csuff2} hold true.
\end{proof}

\begin{remark}\label{mainremark}
 Our new convergence condition imposes, roughly, that the effective activity decays like $\exp(- \mathrm{const} R^{2}r)$ (in three-dimensional spatial domains), in agreement with the intuition that effective interactions are mediated by the $r$-boundary of large spheres. Moreover, remembering the relation between $\widehat z_R$ and $z_R$, we see that our condition allows for activities $z_R$ that \emph{grow} like
\be \label{eq:boundarydecay}
	|B(0,2R)| z_R \leq \exp\Bigl( z_r |B(0,2R)| \bigl(1-O(\tfrac r R)\bigr) \Bigr).
\ee
\end{remark}
It is instructive to compare the bound~\eqref{eq:boundarydecay} with a direct application of the convergence criterion from ~\cite{ueltschi2004cluster} to the original partition function $Z_{\mathbb X_L}$. The latter criterion asks for the existence of a function $a:\mathbb X_L\to \R_+$ so that 
\be
	\int_{\R^3} |f((q,\sigma),(x,\ell))|\e^{a(x,\ell)} z_R \dd x + \int_{\R^3} |f((q,\sigma),(y,s))|\e^{a(y,s)} z_r \dd y   \leq a(q,\sigma) \qquad ((q,\sigma) \in \mathbb X_L).
\ee	
With the ansatz $a(x)=a$ on $\mathbb X_s$ and $a(x)=A$ on $\mathbb X_\ell$, we have the sufficient convergence conditions
\begin{align}
	|B(0,2R)| \e^A \widehat z_R +  |B(0,R+r)|\e^a z_r&\leq A  \label{eq:u1}\\
						 |B(0, R+r)|\e^A \widehat z_R &\leq a \label{eq:u2}
\end{align} 
that impose $|B(0,2R)| z_R \leq \frac {1}{\mathrm e} \exp(- z_r |B(0,2R)|)$.

\begin{prop} \label{prop:kpexponential}
	If the activities $z_r,z_R\geq 0$ satisfy~\eqref{eq:u1} and~\eqref{eq:u2} for some numbers $a,A\geq 0$, then 
		\be \label{eq:ue0}
			|B(0,2R)|z_R \leq \frac{1}{\mathrm e} \exp\Bigl( -z_r|B(0,R+r)|\Bigr).
		\ee
\end{prop}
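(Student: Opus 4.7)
The plan is to show that condition~\eqref{eq:u1} alone already implies the conclusion; condition~\eqref{eq:u2} ensures cluster expansion convergence on the small-sphere side but plays no role in the specific bound on $z_R$. So the proof should be essentially a one-step application of~\eqref{eq:u1} followed by a scalar optimization in $A$.

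First, I would set $Y := z_r |B(0,R+r)|$ and exploit the trivial bound $\mathrm{e}^a \geq 1$ (valid since $a \geq 0$) to simplify the second term of~\eqref{eq:u1}:
\[
	|B(0,2R)|\,\mathrm{e}^A\, z_R + Y \leq |B(0,2R)|\,\mathrm{e}^A\, z_R + \mathrm{e}^a\, z_r |B(0,R+r)| \leq A.
\]
Rearranging yields the scalar bound $|B(0,2R)|\, z_R \leq (A - Y)\,\mathrm{e}^{-A}$, which is valid for whichever $A$ the hypothesis produces.

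Next I would analyze the one-variable function $\varphi(A) := (A-Y)\,\mathrm{e}^{-A}$ on $[0,\infty)$. Its derivative is $\varphi'(A) = (1 + Y - A)\,\mathrm{e}^{-A}$, vanishing uniquely at $A^\star = 1 + Y$; this is the global maximum and gives $\varphi(A^\star) = \mathrm{e}^{-(1+Y)}$. Since the hypothesis does not fix $A$, the usefulness of this step is that $\varphi(A) \leq \mathrm{e}^{-(1+Y)}$ for \emph{every} $A \geq 0$, so the bound transfers to whatever $A$ the hypothesis supplies. Combining,
\[
	|B(0,2R)|\, z_R \leq \mathrm{e}^{-(1+Y)} = \frac{1}{\mathrm{e}} \exp\bigl(-z_r |B(0,R+r)|\bigr),
\]
which is~\eqref{eq:ue0}.

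There is no real obstacle here: the proof is a two-line calculation. The only point worth emphasizing is that condition~\eqref{eq:u2} is not invoked. From the Kotecký–Preiss/Ueltschi point of view,~\eqref{eq:u1} is the convergence estimate associated with a large-sphere vertex and directly constrains $z_R$, whereas~\eqref{eq:u2} is the estimate at a small-sphere vertex; the ``volume-decay'' form of the bound on $z_R$ is baked into~\eqref{eq:u1} after using $\mathrm{e}^a \geq 1$. Sharpening through~\eqref{eq:u2} would give a more implicit but not qualitatively better inequality, which is not what the statement is after.
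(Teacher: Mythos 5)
Your proof is correct and follows essentially the same route as the paper: both arguments reduce~\eqref{eq:u1} to a scalar inequality of the form $|B(0,2R)|z_R \leq \gamma\,\e^{-\gamma-\mathrm{const}}$ and use $\sup_{\gamma\geq 0}\gamma\e^{-\gamma}=1/\mathrm{e}$ (your maximization of $(A-Y)\e^{-A}$ is the same computation after the substitution $\gamma=A-Y$), with the only cosmetic difference being that you apply $\e^a\geq 1$ at the start while the paper applies it at the end. Your observation that~\eqref{eq:u2} is never used also matches the paper.
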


\noindent Clearly our condition~\eqref{eq:boundarydecay} is much better than~\eqref{eq:ue0}. 

\begin{proof}
    Eq.~\eqref{eq:u1} implies 
    \be
    		|B(0,2R)|z_R \e^A=
			|B(0,2R)|\Bigr(z_R \e^{z_r|B(0,R+r)| \exp(a)}\Bigr)\e^{A-z_r|B(0,R+r)|\exp(a)}
    		 \leq A-z_r|B(0,R+r)|\e^a 
    \ee
    hence
    \be 
    		|B(0,2R)| z_R \e^{z_r|B(0,R+r)|\exp(a)}\leq \sup_{\gamma \geq 0} \gamma \e^{-\gamma} =\frac {1}{\mathrm e}.
    \ee
    Since $\e^a\geq1$ the proposition follows.
\end{proof}

\begin{remark} \label{rem:furtherimprovements} 
	As noted above, our condition~\eqref{eq:boundarydecay} is much better than~\eqref{eq:ue0}. 
	Remark~\ref{rem:effective2body} together with convergence criteria for attractive pair potentials shows that there is still further room for improvement: When $R$ is much larger than $r$, the effective interaction is a pair potential with stability constant $B$ of order $r^2 R$. The classical convergence criterion~\cite{ruelle1969book}
	\be
			\widehat z_R \, \e^{2 B} 	\int_{\R^d} \bigl|\e^{-W_2^\mathrm{eff}(x)}-1)\bigr| \dd x \leq \frac{1}{\mathrm{e}}
	\ee
	shows that, for $R\gg r$, the expansion in $\widehat z_R$ converges as well for effective activities of order $\widehat z_R \leq \exp( - \mathrm{const}\, R r^2)$, which is better than the decay $\exp( - \mathrm{const} R^2 r)$ imposed by Lemma~\ref{cor:csuffeasy}. We leave as an open question whether such an improved condition can be proven as well in situations where the effective interactions is truly multi-body, e.g.,  for the penetrable hard spheres-model at moderate values of $R/r$ or for the interacting hard spheres treated in Section~\ref{hs}. 
\end{remark}

Next we turn to the expansions of the densities $\rho_R$ and $\rho_r$, defined with the partial derivatives of~ \eqref{eq:col11} as
\be
	\rho_R(z_R, z_r):= z_R\frac{\partial}{\partial z_R} p(z_R,z_r),\quad \rho_r(z_R, z_r):= z_r\frac{\partial}{\partial z_r} p(z_R,z_r). 
\ee

\begin{theorem}\label{thm:dens}
	Under conditions  \eqref{eq:csuff1} and \eqref{eq:csuff2} from Theorem~\ref{thm:col1}, we have 
	\be \label{rhoR1}
		\rho_R (z_R, z_r)  = \sum_{m=1}^\infty m b_m(z_r) \widehat z_R^m
	\ee
	and
	\be\label{rhor}
		\rho_r (z_R,z_r) = z_r \Bigl( 1 - |B(0,R+r)| \rho_R + \sum_{m=2}^\infty \frac{\dd b_m}{\dd z_r}(z_r) \widehat z_R^{m}\Bigr),
	\ee
	with
	\be \label{rhor-estimate}
		 \sum_{m=2}^\infty \Bigl|\frac{\dd b_m}{\dd z_r}(z_r) \widehat z_R^{m} \Bigr|\leq \e^a - 1. 
	\ee	
\end{theorem}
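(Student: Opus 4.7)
\emph{Proof plan.} The approach is to differentiate the convergent pressure series~\eqref{eq:col11} term-by-term, with the substantive work concentrated in the convergence estimate~\eqref{rhor-estimate}.

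For $\rho_R$: the relation $\widehat z_R = z_R \e^{-z_r|B(0,R+r)|}$ gives $z_R\partial_{z_R}\widehat z_R^m = m\widehat z_R^m$, while $b_m(z_r)$ and $z_r$ have no $z_R$-dependence, so termwise differentiation of~\eqref{eq:col11} produces~\eqref{rhoR1}; the exchange of sum and derivative is justified by the uniform absolute convergence~\eqref{eq:col12} in a neighborhood of the chosen activities (the conditions~\eqref{eq:csuff1}--\eqref{eq:csuff2} being open). For $\rho_r$: the chain rule at fixed $z_R$ produces three contributions from $z_r$, $b_m(z_r)$ and $\widehat z_R^m$; using $\partial_{z_r}\widehat z_R = -|B(0,R+r)|\widehat z_R$ together with the identity from Step~1 one obtains $\partial_{z_r}p = 1 - |B(0,R+r)|\rho_R + \sum_m b_m'(z_r)\widehat z_R^m$, and multiplying by $z_r$ yields~\eqref{rhor}. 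Termwise differentiation in $z_r$ is justified a posteriori by~\eqref{rhor-estimate}.

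For the bound~\eqref{rhor-estimate}: in the penetrable hard spheres setting small particles do not interact, hence $\dd\nu(y) = z_r\,\dd y$ and the weights $\varphi_*^\mathsf T$ do not depend on $z_r$. The representation~\eqref{eq:psit}, together with $\partial_{z_r}[z_r^r\,\dd y_1\cdots\dd y_r] = r\,\dd y_1\otimes \dd\nu^{r-1}$ and the symmetry of $\varphi_*^\mathsf T$ in its chain arguments, gives
\[
 \partial_{z_r}\psi^\mathsf T(x_1,\ldots,x_m) = \sum_{r\geq 1}\frac{1}{(r-1)!}\int_{\R^3}\dd y_1\int_{\mathbb Y^{r-1}}\varphi_*^\mathsf T(x_1,\ldots,x_m;y_1,Y_2,\ldots,Y_r)\,\dd\nu^{r-1}.
\]
Substituting into the integral defining $b_m'(z_r)$, bounding in absolute value, multiplying by $\widehat z_R^m$ and summing $m\geq 2$, translation invariance of $\varphi_*^\mathsf T$ allows one to swap $\int\dd y_1$ with $\int\dd x_1$ and to recognise the resulting expression as the left-hand side of~\eqref{mainest2} with the pinned chain $Y_1$ located at the origin. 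Since conditions~\eqref{eq:csuff1}--\eqref{eq:csuff2} are the specialisations of~\eqref{suff1}--\eqref{suff2} with constant majorants $a(\cdot)\equiv A$ and $b(\cdot)\equiv a$, the bound~\eqref{mainest2} delivers $\e^a - 1$.

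The main obstacle is this last step: matching the $1/m!$ and $1/(r-1)!$ normalisations and the $m\geq 2$, $r\geq 2$ summation ranges in~\eqref{mainest2} with those arising in the estimate of $\sum_{m\geq 2}|b_m'(z_r)|\widehat z_R^m$, and accommodating the $r=1$ boundary contribution (a single pinned small particle linked to at least two large spheres), which falls outside the $r\geq 2$ range of~\eqref{mainest2} as stated but is controlled by the same tree-graph inequality (Proposition~\ref{prop1}) and Poghosyan-Ueltschi argument used in its proof.
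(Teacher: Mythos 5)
Your proposal follows the paper's own proof essentially step for step: the chain-rule computation for both densities (using $z_R\partial_{z_R}\widehat z_R=\widehat z_R$ and $\partial_{z_r}\widehat z_R=-|B(0,R+r)|\widehat z_R$, with the $m=1$ term dropping out since $b_1=1$), and the estimate of $\sum_m|b_m'(z_r)|\,\widehat z_R^m$ via the representation of $\psi^\mathsf T$ through $\varphi_*^\mathsf T$, differentiation of the explicit $z_r^k/k!$ factor, re-rooting at a small sphere by translation invariance, and the bound \eqref{mainest2} with the constant choice \eqref{abchoice}. Your closing remark about the $r=1$ contribution falling outside the stated $r\geq 2$ range of \eqref{mainest2} identifies a point the paper silently absorbs, and it is resolved exactly as you indicate.
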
 

\noindent The theorem is proven at the end of this section. 

\begin{remark}
A combinatorial representation of $\frac{\dd}{\dd z_r} \psi^\mathsf T(x_1,\ldots,x_m)$ (hence, of $\frac{\dd b_m}{\dd z_r}$) allows for an intuitive interpretation of~\eqref{rhor}. First remember that $\psi^\mathsf T(x_1,\ldots, x_m)$ is given by a sum over connected hypergraphs (see Proposition~\ref{prop:psit}), and the weight of a hypergraph is a product of hyperedge weights. For an edge $E=\{i,j\}$, the derivative of the edge weight is 
\begin{multline} \label{eq:hyper1}
	\frac{\dd}{\dd z_r}\Bigl(  \exp\Bigl(- v(x_i,x_j) - z_r|B(x_i,R+r)\cap B(x_j,R+r)| \Bigr)- 1\Bigr) \\
		 = - |B(x_i,R+r)\cap B(x_j,R+r)|\, \exp\Bigl( - W_2(x_i,x_j)\Bigr). 
\end{multline}
For a hyperedge $J$ with $\# J\geq 3$, the derivative is  
\begin{multline} \label{eq:hyper2}
	\frac{\dd}{\dd z_r}\Bigl( \exp\Bigl( z_r (-1)^{\#J-1} \Bigl|\bigcap_{j\in J}B(x_j,R+r)\Bigr|\Bigr) -1 \Bigr) \\ =  (-1)^{\#J-1} \Bigl|\bigcap_{j\in J}B(x_j,R+r)\Bigr|\, \exp( - W_{\#J}(\vect x_J)). 
\end{multline}
Therefore the derivative of $\psi^\mathsf T(x_1,\ldots, x_m)$ is a sum over weighted hyperedge-rooted hypergraphs $\mathfrak h$. Rooting in an edge $J\in E(\mathfrak h)$ changes the weight of the root edge from the left-hand sides into the right-hand sides of Eqs.~\eqref{eq:hyper1} and~\eqref{eq:hyper2}, respectively. 

On the other hand, since small spheres do not interact, at pinned positions of the large spheres they form an ideal gas. Denote by $\la \cdot \ra$ the expected value with respect to the grand-canonical Gibbs measure. By the ideal gas law for small spheres, the expected number of particles is proportional to the free volume,  left after excluding the hard colloid spheres:
\be \label{eq:idealgas}
	\rho_r |\Lambda| = z_r \la V_\mathrm{free}\ra,
\ee
with the free volume given by inclusion-exclusion as 
\be \label{inclusion-exclusion}
	V_\mathrm{free}=|\Lambda|- N_R |B(0,R+r)|+\sum_{k=2}^{N_R} (-1)^{k-1} \sum_{1\leq i_1<\cdots<i_k\leq N_R} \bigl|B(x_{i_1},R+r)\cap \cdots \cap B(x_{i_k},R+r)\bigr|.
\ee
We insert this expression into ~\eqref{eq:idealgas}, divide by the volume and then let it go to infinity, 
write $ \rho^{(m)}(x_1,\ldots,x_m)$ for the $m$-point correlation function of the large spheres, and obtain (with $x_1:=0$)
\be \label{eq:rhor2}
	\rho_r  = z_r \Bigl( 1 - |B(0,R+r)| \rho_R + \sum_{m=2}^\infty \frac{(-1)^{m-1}}{m!}\int_{(\R^3)^{m-1}} \Bigl|\bigcap_{j=1}^m B(x_j,R+r)\Bigr| \rho^{(m)}(0,x_2,\ldots, x_m) \dd \vect x \Bigr). 
\ee
The first two terms on the right-hand side of~\eqref{eq:rhor2} match the first two terms of~\eqref{rhor}, the third term is reconciled with the help of the combinatorial considerations on the derivative of $\psi^\mathsf T$ (hence, $b_m$) given above. 
\end{remark} 

\begin{remark}
	In the penetrable hard-sphere model, the  estimate~\eqref{rhor-estimate} follows from the convergence bound~\eqref{mainest2} in Theorem~\ref{thm2}. When small spheres interact, additional work is needed. Roughly, this is because the bound~\eqref{mainest2} refers to generating functions for graphs rooted in a cloud while the density of small objects requires rooting in a small sphere. There are many ways to root in a small sphere within a cloud, therefore additional combinatorial factors show up, which we address in future work.
\end{remark}

\begin{proof}[Proof of Theorem~\ref{thm:col1}]
	Theorem~\ref{thm:col1} is deduced  from Theorems~\ref{thm1} and~\ref{thm2} by standard arguments: First we check that convergence conditions are satisfied, uniformly in the volume, then we check that in the cluster expansion we can exchange summation and the infinite-volume limit. For the models we consider, it is well-known that boundary conditions do not affect the thermodynamic limit of the pressure. It is convenient to work with periodic boundary conditions. Define 
	$\Lambda= [-L/2,L/2]^3\subset \R^3$ and set $\mathbb X_L = \Lambda \times \{r, R\}$, $\mathbb X_{L,\ell} = \Lambda \times \{R\}$, $\mathbb X_{L,s} = \Lambda\times \{r\}$.
	By some abuse of notation we  identify $(x,\ell)\in  \mathbb X_{L,\ell}$ with $x\in \Lambda$ and $(y,s) \in \mathbb X_{L,s}$ with $y\in \Lambda$. 
	Let 
	\be
		\mathrm{dist}_L^\mathrm{per}(x,y) = \min_{\vect k \in \Z^3} |x-y- L \vect k|
	\ee
	be the distance with periodic boundary conditions and 
	\be \label{eq:periodicf} 
	\begin{aligned}
		f_L^\mathrm{per}(x,y) & := 
				- \1_{\{\mathrm{dist}_L^\mathrm{per}(x,y) < R+r\}}\quad &(x\in \mathbb X_{L,\ell}), \\
				f_L^\mathrm{per}(x,x') &:= 0 \quad &(x,x'\in \mathbb X_{L,\ell}),\\
				f_L^\mathrm{per}(y,y') &:= 0 \quad &(y,y'\in \mathbb X_{L,s}).
	\end{aligned}
	\ee
	Since small objects do not interact, Assumption~\ref{ass:pusmall} is trivial, moreover chains 
	$Y=(y_1,\ldots,y_k)\in \mathbb Y_L = \sqcup_{k=1}^\infty \mathbb X_{L,s}^k$ of length $k \geq 2$ are irrelevant. In abstract terms, the measure $\nu$ defined in~\eqref{eq:nudef} vanishes on $\mathbb Y_L \setminus \mathbb X_{L,s}$ because the Ursell function $\varphi_k^\mathsf T(y_1,\ldots, y_k)$ vanishes for $k \geq 2$. Eq.~\eqref{eq:zetadef} becomes 
	\be
		\zeta_L^ \mathrm{per}(x,y)= f_L^\mathrm{per}(x,y):= - \1_{\{\mathrm{dist}_L^\mathrm{per}(x,y) < R+r\}}\qquad (x\in \mathbb X_{L,\ell},\, y\in \mathbb X_{L,s}\subset \mathbb Y_L).
	\ee
	Assuming that $L$ is large compared to $R+r$, the effective activity becomes 
	\be
		\dd \widehat \mu_L(x) = \widehat z_{R}\1_\Lambda(x) \dd x,\quad \widehat z_R = z_R \exp\bigl( - z_r |B(0, R+r)|\bigr).
	\ee
	Notice that $\widehat z_R$ does not depend on $L$. 
	Define 
	\be
		\tilde \zeta_{L}^\mathrm{per}(x,y) := - \1_{\{R-r< \mathrm{dist}_{L}^{\mathrm{per}}(x,y)< R+r\}}.
	\ee
	Adapting the considerations from Example~\ref{ex1} to periodic boundary conditions, it is easy to check that $\tilde \zeta_L^\mathrm{per}$ satisfies Assumption~\ref{ass:tildezeta}. Finally let 
	\be \label{abchoice}
		\begin{aligned}
		a(x) &:=A\quad &(x\in \mathbb X_{L,\ell}),\\
		b(y) &:= a\quad  	&(y\in \mathbb X_{L,s}).
	\end{aligned} 
	\ee
	Then $\tilde \zeta_L^\mathrm{per}$, $f_L^\mathrm{per}$ meet the convergence conditions~\eqref{suff1} and~\eqref{suff2}  because of the conditions~\eqref{eq:csuff1} and~\eqref{eq:csuff2}.  Therefore Theorems~\ref{thm1} and~\ref{thm2} are applicable. 
	Define $\psi_{L,\mathrm{per}}^\mathsf T$ is defined just as $\psi^\mathsf T$ but with the effective interactions on $\Lambda$ with periodic boundary conditions. Theorem~\ref{thm2} yields 
	\be \label{eq:zlper}
		\log \frac{Z_{\mathbb X_{L,\ell}\cup\mathbb X_{L,s}}^\mathrm{per}}{Z_{\mathbb X_{L,s}}^\mathrm{per}} = \sum_{m=1}^\infty \frac{\widehat z_R^m}{m!}\int_{\Lambda^m} \psi_{L,\mathrm{per}}^\mathsf{T}(x_1,\ldots, x_m) \dd x_1\cdots \dd x_m
	\ee
	with 
	\be \label{eq:perbounds}
		1 + \sum_{m=2}^\infty \frac{{\widehat z_R}^{m-1}}{(m-1)!} \int_{\Lambda^{m-1}} |\psi_{L,\mathrm{per}}^\mathsf T(x_1,x_2,\ldots,x_m) |\dd x_2\cdots \dd x_m \leq \e^A. 
	\ee
	Next we pass to the limit $L\to \infty$. Using Theorem~\ref{thm1} it is not difficult to see that the bound~\eqref{eq:perbounds} holds true with $\psi^\mathsf T$ instead of $\psi_{L,\mathrm{per}}^\mathsf{T}$. The function $\psi_{L,\mathrm{per}}^\mathsf T$ in general depends on $L$ but it coincides with $\psi^\mathsf T$ when $\min_j \mathrm{dist}(x_j, \partial \Lambda) >R+r$. Replacing $\psi_{L,\mathrm{per}}^\mathsf T$ by $\psi^\mathsf{T}$ in~\eqref{eq:zlper} therefore introduces an error of the order of  $L^{2} (R+r)$ (in dimension three). Indeed, 
	\begin{align} 
	  &\sum_{m=1}^\infty \frac{\widehat z_R^m}{m!}\int_{\Lambda^m}	\bigl|  \psi^\mathsf{T}_{L,\mathrm{per}}(x_1,\ldots, x_m)-\psi^\mathsf{T}(x_1,\ldots, x_m)\bigr| \dd x_1\cdots \dd x_m \notag  \\
	  &\qquad \leq \sum_{m=1}^\infty \frac{\widehat z_R^m}{(m-1)!} \int_{\Lambda^m} \1_{\{\dist(x_1,\partial \Lambda) \leq R+r\}} 
			\bigl(|\psi_{L,\mathrm{per}}^\mathsf T(\vect x)| + |\psi^\mathsf T(\vect x)|\bigr) \dd \vect x \notag \\
	&\qquad \leq  2  \e^A \widehat z_R \bigl(L^3 - (L- R- r)^3\bigr) = O( L^2). 
	\end{align} 
	It follows that 
	\be \label{eq:almost}
		\lim_{L\to \infty}\frac{1}{L^3} 		\log \frac{Z_{\mathbb X_{L,\ell}\cup\mathbb X_{L,s}}^\mathrm{per}}{Z_{\mathbb X_{L,s}}^ \mathrm{per}}
			= \lim_{L\to \infty}  \sum_{m=1}^\infty \frac{\widehat z_R^m}{m!L^3 }\int_{\Lambda^m} \psi^\mathsf{T}(x_1,\ldots, x_m) \dd x_1\cdots \dd x_m. 
	\ee
	By translation invariance, $\int_{\Lambda^m} \psi^\mathsf T (\vect x) \dd \vect x$ is equal to $L^3$ times the integral of $\psi^\mathsf T(0,x'_2,\ldots, x'_m)$ over $(\R^3)^{m-1}$, up to boundary error terms. Exchanging summation and limits in~\eqref{eq:almost} (justified by dominated convergence), we obtain
	\be
		\lim_{L\to \infty}\log \frac{Z_{\mathbb X_{L,\ell}\cup\mathbb X_{L,s}}^\mathrm{per}}{Z_{\mathbb X_{L,s}}^\mathrm{per}} = \sum_{m=1}^\infty \frac{\widehat z_R^m}{m!}   \int_{(\R^3)^{m-1}} \psi^\mathsf T(0,\vect{x'}) \dd \vect {x'}= \sum_{m=1}^\infty \frac{\widehat z_R^m}{m!}  b_m(z_r).
	\ee
	To conclude, we note that $\log Z_{\mathbb X_{L,s}}^\mathrm{per} = z_r$ since small objects  on their own form an ideal gas, and we finally obtain the expansion~\eqref{eq:col11} of the pressure $p(z_R,z_r)$. The bound~\eqref{eq:col12} is an immediate consequence of~\eqref{eq:perbounds} with $\psi^\mathsf T$ instead of $\psi^\mathsf T_{L,\mathrm{per}}$, and of the definition of $b_m(z_r)$. 
\end{proof} 

\begin{proof}[Proof of Theorem~\ref{thm:dens}]
	Let $\widehat p(\widehat z_R, z_r):= \sum_{m=1}^\infty m b_m(z_r) \widehat z_R^m$. Then by Theorem~\ref{thm:col1}, formula~\ref{eq:col11},
	we have $p(z_R,z_r) = z_r + \widehat p(\widehat z_R(z_R, z_r), z_r)$. 
		By the chain rule, we have
	\be \label{rho1}
		\rho_R(z_R,z_r) = z_R \frac{\partial}{\partial z_R} \Bigl( z_r + \widehat p(\widehat z_R (z_R,z_r), z_r) \Bigr) = z_R \frac{\partial \widehat p}{\partial \widehat z_R}( \widehat z_R (z_R,z_r), z_r) \frac{\partial \widehat z_R}{\partial z_R} (z_R,z_r). 
	\ee
	Since $\widehat z_R$ is a linear function of $z_R$, we have
	\be \label{rho2}
		z_R\frac{\partial \widehat z_R}{\partial z_R} (z_R,z_r) = \widehat z_R(z_R,z_r). 
	\ee
	By Theorem~\ref{thm:col1}, the 	partial derivative of $\widehat p$ with respect to its first entry is
	\be \label{rho3}
		\frac{\partial \widehat p}{\partial \widehat z_R} (\widehat z_R,z_r) = \sum_{m=1}^\infty m b_m(z_r)\widehat z_R^{m-1}.
	\ee
	Eq.~\eqref{rhoR1} follows from~\eqref{rho1},~\eqref{rho2} and~\eqref{rho3}. For the density of small spheres, the chain rule yields 
	\be \label{rho4}
		\rho_r(z_R,z_r)
		 =z_r + z_r \frac{\partial \widehat p}{\partial \widehat z_R}( \widehat z_R (z_R,z_r), z_r) \frac{\partial \widehat z_R}{\partial z_r} (z_R,z_r) + z_r \frac{\partial \widehat p}{\partial z_r}\bigl( \widehat z_R(z_R, z_r), z_r\bigr). 
	\ee
	The middle expression is equal to 
	\begin{align}
		&  z_r \frac{\partial \widehat p}{\partial \widehat z_R}( \widehat z_R (z_R,z_r), z_r) \frac{\partial \widehat z_R}{\partial z_r} (z_R,z_r) \notag \\
		 & \qquad = z_r \Bigl( \sum_{m=1}^\infty m b_m(z_r) \widehat z_R(z_R,z_r)^{m-1}\Bigr) \Bigl( - |B(0,R+r)|z_R\, \e^{- |B(0, R+r)|}\Bigr)  \notag \\
		 & \qquad =  - |B(0,R+r)| z_r \rho_R(z_R, z_r). \label{rho5}
	\end{align}
	In the last line we have used~\eqref{rhoR1}. For the partial derivative of $\frac{\partial \widehat p}{\partial z_r}$ in~\eqref{rho4}, assuming we may  exchange differientiation and summation in the definition of $\widehat p$, we get 
	\be \label{rho6}
		z_r \frac{\partial \widehat p}{\partial z_r}\bigl( \widehat z_R(z_R, z_r), z_r\bigr) = z_r\sum_{m=1}^\infty \frac{\dd b_m}{\dd z_r}(z_r) \Bigl( \widehat z_R(z_R, z_r)\Bigr)^{m}. 
	\ee
	In view of $b_1 (z_r) = 1$, the summand for $m=1$ vanishes. 
	The expression~\eqref{rhor} for $\rho_r(z_R, z_r)$ then follows from Eqs.~\eqref{rho4}, \eqref{rho5}, and~\eqref{rho6}. 
	
	It remains to prove the estimate~\eqref{rhor-estimate} (which also justifies the exchange of differentiation and summation leading to~\eqref{rho6}). Because of $\nu (\mathbb Y\setminus \mathbb X_s) =0$, Eq.~\eqref{eq:psit} simplifies and becomes 
	\be
		\psi^\mathsf T(x_1,\ldots, x_m) = \sum_{k=1}^\infty \frac{z_r^k}{k!}\int_{(\R^3)^k} \varphi_*^\mathsf T(x_1,\ldots,x_m;y_1,\ldots, y_k) \dd \vect y, 
	\ee
	which yields 
	\be
		\Bigl|\frac{\dd b_m}{\dd z_r}(z_r) \Bigr|  \leq \sum_{k=1}^\infty \frac{z_r^k}{k!}\int_{(\R^3)^{k+m-1}} |\varphi_*^\mathsf T(0,x_2,\ldots,x_m;y_1,\ldots, y_k)| \dd x_2\cdots \dd x_m \dd y_1\cdots \dd y_k. 
	\ee
	By translation invariance, 
	\be
		\Bigl|\frac{\dd b_m}{\dd z_r}(z_r) \Bigr|  \leq \sum_{k=1}^\infty \frac{z_r^k}{k!}\int_{(\R^3)^{k+m-1}} |\varphi_*^\mathsf T(x_1,x_2,\ldots,x_m;0,y_2,\ldots, y_k)| \dd x_1\cdots \dd x_m \dd y_2\cdots \dd y_k. 
	\ee
	The bound~\eqref{rhor-estimate} now follows from the bound~\eqref{mainest2} in Theorem~\ref{thm1} with the choice~\eqref{abchoice}. 
\end{proof}

\section{Colloid hard sphere model}\label{hs}

Consider a two-type mixture of hard spheres in $\Lambda\subset \R^d$, with two values $R>r>0$ of radii. Define 
\begin{align*}
 1 + f_{\ell\ell}(x_i,x_j) &= \1_{\{ |x_i-x_j|\geq 2R \}},\\
 1 + f_{\ell s}(x_i,y_j) &= \1_{\{ |x_i-y_j|\geq R +r \}}\\
 1 + f_{ss}(y_i,y_j) &= \1_{\{ |y_i-y_j|\geq 2r \}}.
\end{align*}
The counterparts with periodic boundary conditions are denoted $f_{\sigma\tau}^\mathrm{per}(q_i,q_j)$, and for simplicity the volume-dependence is suppressed from the notation (compare~\eqref{eq:periodicf}). 

The grand-canonical partition function in a bounded volume $\Lambda$ with free boundary conditions is 
\begin{multline}\label{tshs}
	\Xi_\Lambda(z_r,z_R) = 
	\sum_{m,k\geq 0}\frac{z_R^m z_r^k}{m!k!}
		\int_{\Lambda^{m+k}}\dd x_1\cdots\dd x_m\, \dd y_{m+1}\cdots \dd y_{m+k}
		\prod_{i <j\leq m}(1+f_{\ell\ell}(x_i,x_j)) \\
		\prod_{i\leq m< j}(1+f_{\ell s}(x_i,y_j))
		\prod_{m< i <j}(1+f_{ss}(y_i,y_j)). 
\end{multline}
We are interested in the expansion of the pressure in finite and infinite volume
\be
	p_\Lambda(z_r,z_R) = \frac{1}{|\Lambda|}\log \Xi_\Lambda(z_r,z_R),
\quad p(z_r,z_R) = \lim_{\Lambda\nearrow \R^3} p_\Lambda(z_r,z_R)
\ee	
and in the finite-volume pressure $p_\Lambda^\mathrm{per}$ with periodic boundary condition in terms of the activity $z_r$ of small spheres and an effective activity $\widehat z_R$ of large spheres. Convergence criteria need to be formulated and checked carefully because the  effective activity depends on the volume and on the boundary conditions.  Recalling \eqref{eq:zetadef} let 
\be \label{zeta-hs}
	 \zeta\bigl(x,(y_1,\ldots,y_k)\bigr) =  \prod_{j=1}^k (1 + f_{\ell s}(x,y_j)) - 1 = - \1_{\{\exists j \in \{1,\ldots, n\}: |x-y_i|< R+r\}}
\ee
be minus the indicator that one of the small spheres centered at $y_1,\ldots,y_n$ overlaps a large sphere centered at $x$. Write $\zeta^\mathrm{per}(x,Y)$ for the counterpart with periodic boundary conditions. Assume that 
\be \label{eq:minconv} 
	|B(0,2r)|\, |z_r| \leq \frac{1}{\mathrm{e}}.
\ee
Define 
\be
	\widehat z_R (x) :=  z_R\, \e^{A(x;z_r)},\quad 	\widehat z_{\Lambda,R}(x) :=  z_R\, \e^{A_\Lambda(x;z_r)},\quad 	\widehat z_{\Lambda,R}^\mathrm{per}(x) := z_{R}\,\e^{A_\Lambda^\mathrm{per}(x;z_r)},
\ee
with 
\begin{align}
	A(x;z_r)&:=  \sum_{n=1}^\infty \frac{z_r^n}{n!} \int_{(\R^d)^n} \sum_{\gamma \in \mathcal C_n}  \zeta( x,(y_1,\ldots, y_n))\sum_{\gamma \in \mathcal C_n}  \prod_{\{i,j\}\in E(\gamma)} f_{ss} (y_i,y_j) \dd \vect y, \\
	A_\Lambda(x;z_r)&:=  \sum_{n=1}^\infty \frac{z_r^n}{n!} \int_{\Lambda^n} \sum_{\gamma \in \mathcal C_n}  \zeta( x,(y_1,\ldots, y_n))\sum_{\gamma \in \mathcal C_n}  \prod_{\{i,j\}\in E(\gamma)} f_{ss}(y_i,y_j) \dd \vect y, \\
	A_\Lambda^\mathrm{per}(x;z_r)&:= \sum_{n=1}^\infty \frac{z_r^n}{n!} \int_{\Lambda^n} \sum_{\gamma \in \mathcal C_n}  \zeta^\mathrm{per}( x,(y_1,\ldots, y_n))\sum_{\gamma \in \mathcal C_n}  \prod_{\{i,j\}\in E(\gamma)} f_{ss} ^\mathrm{per} (y_i,y_j) \dd \vect y.
\end{align} 
The series are convergent by the condition~\eqref{eq:minconv} and standard cluster expansion criteria (compare Lemma~\ref{lem:well-defined}). 
By translation invariance, the effective activities in infinite volume and for periodic boundary conditions are homogeneous, 
\be
		\widehat z_R (x)  = \widehat z_R(0) =:\widehat z_R,\quad \widehat z_{\Lambda,R}^\mathrm{per} (x)  =\widehat z_{\Lambda,R}^\mathrm{per} (0) =: \widehat z_{\Lambda,R}^\mathrm{per}.
\ee
In finite volume, the effective activity can also be expressed as 
\be \label{zhat-finite}
	\widehat z_{\Lambda,R}(x)= \frac{z_R\, \Xi_{\Lambda\setminus B(x,R+r)}(z_r,0)}{\Xi_\Lambda(z_r,0)}.
\ee
The excluded volume is smaller when $x$ is close to the boundary $\partial \Lambda$, accordingly the effective activity $\widehat z_{\Lambda,R}(x)$ is larger. 

Define $b_m(z_r)$ as in~\eqref{eq:bmdef}, and let $b_{\Lambda,m}^\mathrm{per}(z_r)$ be its counterpart with periodic boundary conditions. Let $\partial_r B(0,R):= B(0,R+r) \setminus B(0,R-r)$.

\begin{theorem} \label{thm:hsper}
	Assume that $L > 2( R+r)$ and that the activities $z_r$ and $\widehat z_{\Lambda, R}^\mathrm{per} = \widehat z_{\Lambda, R}^\mathrm{per}(z_r,z_R)$ satisfy 
	\begin{align}
		|B(0,2r)|\,|z_r| \, \e^{b+c}& \leq c,  \label{hsper-suff0} \\
		   |\partial_r B(0,R)|\, |z_r| \e^{b+ c}+ \e^a\,|B(0,2R)||\widehat z_{\Lambda,R}^\mathrm{per}| 	&\leq a, \label{hsper-suff1}\\
	    \e^a\, |\partial_r B(0,R)|\,|\widehat z_{\Lambda,R}^\mathrm{per}|& \leq b, \label{hsper-suff2}
	\end{align}
	for some $a,b,c\geq 0$. Then 
	\be
		p_\Lambda^\mathrm{per}(z_R,z_r) = \sum_{m=1}^\infty b_{\Lambda,m}^\mathrm{per} (z_r)\, (\widehat z_{\Lambda,R}^\mathrm{per})^m,
	\ee
	with 
	\be
		\sum_{m=1}^\infty m \bigl|b_{\Lambda,m}^\mathrm{per} (z_r)\, (\widehat z_{\Lambda,R}^\mathrm{per})^m\bigr| \leq \e^a |\widehat z_{\Lambda,R}^\mathrm{per}|<\infty. 
	\ee
\end{theorem}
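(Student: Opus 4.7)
The plan is to deduce Theorem~\ref{thm:hsper} from Theorems~\ref{thm1} and~\ref{thm2} applied in the periodic finite-volume setting, following the template of the proof of Theorem~\ref{thm:col1}. The new feature compared with the penetrable-sphere case is that the small spheres now interact, so chains $Y = (y_1,\ldots,y_k) \in \mathbb Y$ of arbitrary length $k = k(Y)$ carry non-trivial weight through the Ursell function $\varphi^\mathsf T_{ss,\mathrm{per}}$; consequently the replacement function $\tilde\zeta$ must be defined on chains, and Assumption~\ref{ass:tildezeta} has to be verified via a geometric path argument on the overlap graph of $Y$.

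Concretely, I set $\mathbb X_{L,\ell} = \Lambda \times \{R\}$ and $\mathbb X_{L,s} = \Lambda \times \{r\}$, use the Mayer functions $f_{\sigma\tau}^\mathrm{per}$ together with $\zeta^\mathrm{per}$ from~\eqref{zeta-hs}, and pick the constants $c(y) \equiv c$, $a(x) \equiv a$, the length-weighted function $b(Y) := b\, k(Y)$, and
\[
\tilde\zeta^\mathrm{per}(x,Y) := -\sum_{j=1}^{k(Y)} \1_{\{y_j \in \partial_r B^\mathrm{per}(x, R)\}},
\]
where $\partial_r B^\mathrm{per}(x,R) = B^\mathrm{per}(x,R+r)\setminus B^\mathrm{per}(x,R-r)$ is the periodic depletion shell. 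Assumption~\ref{ass:pusmall} then reduces to the standard small-sphere condition $|z_r| |B(0,2r)| \e^c \leq c$, which follows from~\eqref{hsper-suff0} because $b \geq 0$; the second integrability clause is automatic in bounded $\Lambda$.

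The hard part is Assumption~\ref{ass:tildezeta}. Since $\prod_{i<j}(1+f_{\ell\ell}^\mathrm{per})$ vanishes as soon as any pair of $x_i$'s overlaps in the periodic metric, both~\eqref{tildezeta1} and~\eqref{tildezeta2} reduce to the geometric claim: whenever $|x_i - x_{j_0}|^\mathrm{per} \geq 2R$ and both $\zeta^\mathrm{per}(x_i, Y)$ and $\zeta^\mathrm{per}(x_{j_0}, Y)$ are nonzero for a chain $Y$ contributing to $|\nu|$, some $y_j$ must lie in $\partial_r B^\mathrm{per}(x_i, R)$. To prove this I use that outside a $\nu$-null set the overlap graph on $\{y_1,\ldots,y_k\}$ is connected (otherwise $\varphi^\mathsf T_{ss,\mathrm{per}}$ vanishes). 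Pick $y_{i_0}$ overlapping $x_i$ and $y_{i_1}$ overlapping $x_{j_0}$; in the nontrivial case $y_{i_0} \in B^\mathrm{per}(x_i, R-r)$, the triangle inequality combined with $|x_i - x_{j_0}|^\mathrm{per} \geq 2R$ forces $i_0 \neq i_1$ and $y_{i_1} \notin B^\mathrm{per}(x_i, R-r)$. Following any overlap path from $y_{i_0}$ to $y_{i_1}$, at the first exit from $B^\mathrm{per}(x_i, R-r)$ the exiting element $y_{s_{m+1}}$ obeys $|x_i - y_{s_{m+1}}|^\mathrm{per} < (R-r) + 2r = R+r$ by $|y_{s_m} - y_{s_{m+1}}|^\mathrm{per} < 2r$, and therefore lies in the shell. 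This step is the only genuinely new ingredient compared with the penetrable case.

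With Assumption~\ref{ass:tildezeta} in hand, the hypotheses of Theorem~\ref{thm1} unpack as follows: the large-large part of~\eqref{suff1} is $\e^a |\widehat z_{\Lambda,R}^\mathrm{per}| |B(0,2R)|$, while the chain part, by the symmetry of $\varphi^\mathsf T_{ss,\mathrm{per}}$ in its arguments, equals
\[
\sum_{k \geq 1} \frac{(|z_r|\e^b)^k}{(k-1)!} \int_{\Lambda^k} \1_{\{y_1 \in \partial_r B^\mathrm{per}(x, R)\}}\, \bigl|\varphi^\mathsf T_{ss,\mathrm{per}}(\vect y)\bigr|\, \dd \vect y.
\]
By Theorem~1 of~\cite{ueltschi2004cluster} applied to the small subsystem with modified activity $|z_r| \e^b$---whose convergence is precisely~\eqref{hsper-suff0}---this is bounded by $|z_r| \e^{b+c} |\partial_r B(0, R)|$, and~\eqref{suff1} then follows from~\eqref{hsper-suff1}. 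Condition~\eqref{suff2} reduces to the pointwise estimate $\e^a |\widehat z_{\Lambda,R}^\mathrm{per}|\, k(Y)\, |\partial_r B(0,R)| \leq b\,k(Y)$, which is~\eqref{hsper-suff2}. Condition~\eqref{eq:fivo} is automatic in bounded $\Lambda$. Theorem~\ref{thm2} then delivers the convergent expansion of $\log(\Xi_\Lambda^\mathrm{per}(z_r, z_R)/\Xi_\Lambda^\mathrm{per}(z_r, 0))$ in powers of $\widehat z_{\Lambda,R}^\mathrm{per}$, and translation invariance under periodic boundary conditions factors out $|\Lambda|$ so that the resulting coefficients coincide with $b_{\Lambda,m}^\mathrm{per}(z_r)$ defined as in~\eqref{eq:bmdef}, together with the absolute convergence bound $\sum_m m |b_{\Lambda,m}^\mathrm{per}(\widehat z_{\Lambda,R}^\mathrm{per})^m| \leq \e^a |\widehat z_{\Lambda,R}^\mathrm{per}|$ from~\eqref{thirda}. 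Adding the $z_R$-independent contribution $|\Lambda|^{-1}\log \Xi_\Lambda^\mathrm{per}(z_r, 0)$, expanded separately by the pure small-sphere cluster expansion, yields the announced representation of $p_\Lambda^\mathrm{per}(z_r, z_R)$.
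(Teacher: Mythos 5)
Your proposal is correct and follows essentially the same route as the paper's proof: reduce to Theorems~\ref{thm1} and~\ref{thm2} with $a(x)\equiv a$, $b(Y)=b\,k(Y)$, verify Assumption~\ref{ass:tildezeta} for a corona-based $\tilde\zeta$ using $\nu$-a.s.\ connectedness of the overlap graph of a chain, and bound the chain integral in~\eqref{suff1} by the Ueltschi criterion at the tilted activity $|z_r|\e^{b}$. The only (immaterial) differences are that you take $\tilde\zeta$ to be minus the \emph{number} of chain points in the shell rather than minus the indicator that at least one lies there, and you phrase the geometric step as a first-exit argument along an overlap path rather than the paper's two-group contradiction; both yield the same estimates.
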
 

\begin{remark}
	A similar theorem holds true for free boundary conditions, however because of the inhomogeneity of the effective activity, the conditions are written for $\sup_{x\in \Lambda}\widehat z_{\Lambda,R}(x)$. 
\end{remark} 

\begin{proof}
	We check that the sufficient conditions from Theorems~\ref{thm1} and~\ref{thm2} are met. Notice that Eq.~\eqref{hsper-suff1} is formulated directly in terms of the activity $z_r$ of small spheres rather than the activity measure $\nu$ of clouds. First we match the setting of the theorem to the setting studied earlier. We define $\mathbb X, \mathbb X_s, \mathbb X_\ell$  as in~\eqref{eq:exo1}, the activity measure $\mu$ as in~\eqref{eq:exo2},  and we define $f^\mathrm{per}((q,\sigma), (q',\tau)) := f^\mathrm{per}_{\sigma\tau}(q,q')$. For simplicity the volume-dependence is partially suppressed in the notation. The effective activity becomes $\dd \widehat \mu^\mathrm{per}(x) = \widehat z_{\Lambda,R}^\mathrm{per} \dd x$. The measure $\nu^\mathrm{per}$ on $\mathbb Y$ becomes 
	\be
		\int_\mathbb Y h(Y) \dd \nu^\mathrm{per}(Y) = \sum_{n=1}^\infty \frac{z_r^n}{n!} \int_{\Lambda^n} h(y_1,\ldots, y_n) \varphi^{\mathrm{per},\mathsf T}_{ss} (y_1,\ldots,y_n) \dd y_1\cdots \dd y_n,
	\ee
	with $\varphi^{\mathrm{per},\mathsf T}_{ss} (y_1,\ldots,y_n)$ the Ursell function for small spheres. 
	Assumption~\ref{ass:pusmall} is satisfied because of~\eqref{hsper-suff0} and standard cluster expansion criteria ~\cite{ueltschi2004cluster}. For Assumption~\ref{ass:tildezeta}, let 	
	\be
		{\tilde\zeta}^\mathrm{per}(x, (y_1,\ldots, y_n)):= -  \1_{\{\exists i\in \{1,\ldots,n\}:\,R-r < \mathrm{dist}_L^\mathrm{per}(x,y_i) < R+r\}}
	\ee
	be minus the indicator that the cloud $Y$ has at least one point in the ``corona'' of $B(x,R)$:
	\be \label{eq:percorona}
		\{y\in \Lambda\mid  R-r \leq \mathrm{dist}_L^\mathrm{per}(x, y) < R+r\}.
	\ee
	We check that Assumption~\ref{ass:tildezeta} is satisfied.
	Let $k\geq 2$, $x_1,\ldots, x_k \in \mathbb X_s$, and $Y= (y_1,\ldots, y_n) \in \mathbb X_s^n\subset \mathbb Y$. It is enough to consider the case that $Y$ is connected, more precisely, the graph with vertices $\{1,\ldots, n\}$ and edges $\{\{i,j\}\mid\, |y_i-y_j|< 2r \}$ is connected; this is because, for hard-sphere interactions, the set of $Y$'s that do not satisfy this condition is a $\nu$ null set. 
	
	The left-hand side of~\eqref{tildezeta1} is the indicator that $\mathrm{dist}_L^\mathrm{per}(x_i,x_j)\geq  2R$ for all $1\leq i < j \leq k$ and that for each $i \in \{1,\ldots, k\}$, there exists a $j  = j(i) \in \{1,\ldots, n\}$ such that $\mathrm{dist}_L^\mathrm{per}(x_i,y_j) < R+r$. If the indicator vanishes, the inequality~\eqref{tildezeta1} is trival. 
If the indicator is equal to $1$, 
	 there exists $j(1)$ with $\mathrm{dist}_L^\mathrm{per}(x_1,y_{j(1)}) < R+r$. Suppose by contradiction that we cannot impose the additional condition $R-r\leq \mathrm{dist}_L^\mathrm{per}(x_1,y_{j(1)}) <R+r$. Then all points $y_j$ of $Y$ satisfy either $\mathrm{dist}_L^\mathrm{per}(x_1,y_{j}) < R-r$ or $\mathrm{dist}_L^\mathrm{per}(x_1,y_{j}) > R-r$.	
 This splits points $y_j$ into two groups. The first group contains $y_{j(1)}$ and is therefore non-empty. Points between two distinct groups have distance strictly larger than $2r$, so the second group must be empty---otherwise we would have a contradiction with the connectedness of $Y$. Thus all points of $Y$ lie within the ball centered at $x_1$ of radius $R-r$. But this in turn is in contradiction with the existence of $j(2)$ such that $\mathrm{dist}_L^\mathrm{per}(x_2,y_{j(2)}) < R+r$. Thus we have proven that there exists $j(1)$ with $R-r\leq \mathrm{dist}_L^\mathrm{per}(x_1,y_{j(1)}) <R+r$, hence $|\tilde \zeta (x_1,Y)| =1$. A similar argument shows $|\tilde \zeta (x_i,Y)| =1$ for all $i$ and the condition~\eqref{tildezeta1} holds true.

	The left-hand side of~\eqref{tildezeta2} is the indicator that large spheres do not overlap, the cloud $Y$ intersects the large sphere centered at $x_1$ and at least one other large sphere. Proceeding as in the proof of~\eqref{tildezeta1}, we see that on the event that the indicator equals $1$, we must also have $|\tilde \zeta(x_1,Y)| =1$ and we conclude that~\eqref{tildezeta2} holds true, completing the proof that Assumption~\ref{ass:tildezeta} is satisfied.
	
	Finally we turn to the convergence conditions~\eqref{suff1} and~\eqref{suff2}. We define functions $a(\cdot):\mathbb X_\ell \to \R_+$ and $b(\cdot):\mathbb Y\to \R_+$ by 
	\be
		a(x):= a,\qquad b(y_1,\ldots, y_n) := bn	\quad (x\in \mathbb X_\ell,\ Y=(y_1,\ldots, y_n) \in \mathbb Y)
	\ee
	with $a,b\geq 0$ the numbers from Eqs.~\eqref{hsper-suff0}-\eqref{hsper-suff2}. Conditions~\eqref{suff1} and~\eqref{suff2} read 
	\begin{align}
		\int_\mathbb Y|\tilde \zeta^\mathrm{per}(x,Y')| \e^{b(Y')} \dd |\nu^\mathrm{per} |(Y') + \e^a |\widehat z_{\Lambda,R}^\mathrm{per}| |B(0,2R)| &\leq a, \label{persuff1} \\
			 \e^a	|\widehat z_{\Lambda,R}^\mathrm{per}| \int_\Lambda |\tilde \zeta^\mathrm{per}(x',Y)| \dd x' &\leq b(Y). \label{persuff2}
	\end{align} 
	For $Y=(y_1,\ldots, y_n)$, the right-hand side of~\eqref{persuff2} is $bn$ and in the left-hand 
	side we may bound
	\be
		\int_\Lambda |\tilde \zeta^\mathrm{per}(x',Y)| \dd x' \leq 	\int_\Lambda  \sum_{i=1}  ^n \1_{\{R-r < \mathrm{dist}_L^\mathrm{per}(x,y_i) < R+r\}} \dd x' = n |B(0,R+r)\setminus B(0,R-r)|.
	\ee
	Therefore condition~\eqref{hsper-suff2} implies~\eqref{persuff2} hence~\eqref{suff1}. In the left-hand side of~\eqref{persuff1}, we bound 
	\begin{equation}
	\begin{aligned}
		&	\int_\mathbb Y|\tilde \zeta^\mathrm{per}(x,Y')| \e^{b(Y')} \dd |\nu^\mathrm{per} |(Y')\\
		& \qquad  \leq \sum_{n=1}^\infty\frac{|z_r|^n}{n!} \int_{\Lambda^n} \sum_{i=1}^n \1_{\{R-r \leq \mathrm{dist}_L^\mathrm{per}(x,y_i) < R+r\}}\e^{bn} |\varphi_{ss}^{\mathrm{per},\mathsf T} (y_1,\ldots, y_n)| \dd \vect y  \\
		&\qquad = |z_r| \e^b \int_\Lambda  \1_{\{R-r \leq \mathrm{dist}_L^\mathrm{per}(x,y_1) < R+r\}}\Biggl\{\sum_{n=1}^\infty \frac{(|z_r|\e^b)^{n-1}}{(n-1)!} \int_{\Lambda^{n-1}}|\varphi_{ss}^{\mathrm{per},\mathsf T} (y_1,\ldots, y_n)| \dd y_2\cdots \dd y_n \Biggr\} \dd y_1\\
		&\qquad \leq |z_r| \e^b |B(0,R+r)\setminus B(0,R-r)| \e^c.
	\end{aligned} 
	\end{equation} 
	For the last line we have used condition~\eqref{hsper-suff0} and standard estimates from cluster expansions~\cite{ueltschi2004cluster}.
	As a consequence, condition~\eqref{persuff1} implies~\eqref{hsper-suff1} hence~\eqref{suff1}.  The finite-volume condition~\eqref{eq:fivo} is true as well. Thus we have checked all conditions in Theorems~\ref{thm1} and~\ref{thm2} and the proof is complete. 
\end{proof} 

\noindent Next we turn to the pressure in infinite volume. 

\begin{theorem}
	Assume that  $z_r$ and $\widehat z_{ R} = \widehat z_{R}(z_r,z_R)$ satisfy 
	\begin{align}
		|B(0,2r)|\,|z_r| \, \e^{b+c}& \leq c,  \label{hs-suff0} \\
		   |\partial_r B(0,R)|\, |z_r| \e^{b+c}+ \e^a\,|B(0,2R)||\widehat z_{R}| 	& < a, \label{hs-suff1}\\
	    \e^a\, |\partial_r B(0,R)|\,|\widehat z_{R}| & < b \label{hs-suff2}
	\end{align}
	for some $a,b,c\geq 0$. Then 
	\be
		p(z_R,z_r) = \sum_{m=1}^\infty b_{m}(z_r)\, (\widehat z_{R})^m
	\ee
	with 
	\be
		\sum_{m=1}^\infty m \bigl|b_{m}(z_r)\, (\widehat z_{R})^m\bigr| \leq \e^a |\widehat z_{\Lambda}|<\infty. 
	\ee
\end{theorem}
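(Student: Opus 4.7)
The plan is to deduce this infinite-volume statement from the periodic finite-volume result of Theorem~\ref{thm:hsper} by a limiting argument parallel to the proof of Theorem~\ref{thm:col1}. Three ingredients are needed: (i) convergence of the periodic effective activity $\widehat z_{\Lambda,R}^\mathrm{per}$ to $\widehat z_R$, (ii) a transfer of the strict infinite-volume convergence conditions~\eqref{hs-suff0}--\eqref{hs-suff2} into the finite-volume conditions~\eqref{hsper-suff0}--\eqref{hsper-suff2} for all sufficiently large $L$, and (iii) term-by-term passage to the limit in the finite-volume expansion, justified by the uniform bound supplied by Theorem~\ref{thm:hsper}.

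First, I would establish $\widehat z_{\Lambda,R}^\mathrm{per}\to \widehat z_R$ as $L\to\infty$. By translation invariance this reduces to showing $A_\Lambda^\mathrm{per}(0;z_r)\to A(0;z_r)$. Under~\eqref{hs-suff0}, the classical cluster expansion bound (\cite{ueltschi2004cluster}, already used in proving Lemma~\ref{lem:well-defined}) dominates, uniformly in $L$, the $n$-th term of the sum defining $A_\Lambda^\mathrm{per}$ by an integrable majorant whose series in $n$ is absolutely convergent. For each fixed $n$ and each $\vect y$ in a compact set, the periodic integrand agrees with the Euclidean one once $L$ is large enough, and $|\zeta^\mathrm{per}(0,\vect y)|$ is supported in a neighborhood of the origin of size $O(R+r)+O(nr)$. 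Dominated convergence then gives $A_\Lambda^\mathrm{per}(0;z_r)\to A(0;z_r)$ and hence $\widehat z_{\Lambda,R}^\mathrm{per}\to \widehat z_R$.

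Second, because~\eqref{hs-suff1} and~\eqref{hs-suff2} are \emph{strict} inequalities, continuity in $\widehat z_{\Lambda,R}^\mathrm{per}$ ensures that for some $L_0$, the weak inequalities~\eqref{hsper-suff1}--\eqref{hsper-suff2} of Theorem~\ref{thm:hsper} hold with the same $a,b,c$ for all $L>L_0$; condition~\eqref{hsper-suff0} coincides with~\eqref{hs-suff0} and is already assumed. Applying Theorem~\ref{thm:hsper} yields, for each such $L$,
\begin{equation}
p_\Lambda^\mathrm{per}(z_R,z_r) = \sum_{m=1}^\infty b_{\Lambda,m}^\mathrm{per}(z_r)\bigl(\widehat z_{\Lambda,R}^\mathrm{per}\bigr)^m, \qquad \sum_{m=1}^\infty m\bigl|b_{\Lambda,m}^\mathrm{per}(z_r)\bigr|\bigl|\widehat z_{\Lambda,R}^\mathrm{per}\bigr|^m \leq \e^a\bigl|\widehat z_{\Lambda,R}^\mathrm{per}\bigr|.
\end{equation}

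Third, I would pass to $L\to\infty$. The thermodynamic pressure is independent of boundary conditions for these short-range hard-sphere systems, so $p_\Lambda^\mathrm{per}\to p$. For the coefficients, the truncated weight $\psi_{L,\mathrm{per}}^\mathsf T(0,x_2,\ldots,x_m)$ coincides with $\psi^\mathsf T(0,x_2,\ldots,x_m)$ once all arguments lie at distance greater than $R+r$ from $\partial\Lambda$ (in the pair-interaction part) and the small-sphere clusters contributing through $\varphi^{\mathrm{per},\mathsf T}_{ss}$ also fit into a Euclidean ball; exactly as in the proof of Theorem~\ref{thm:col1}, the boundary defect in $b_{\Lambda,m}^\mathrm{per}(z_r)$ compared to $b_m(z_r)$ is $O(L^{d-1})$ and negligible after division by $|\Lambda|$. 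The uniform-in-$L$ bound from Theorem~\ref{thm:hsper} together with $\widehat z_{\Lambda,R}^\mathrm{per}\to \widehat z_R$ supplies a summable majorant, so dominated convergence yields both $b_{\Lambda,m}^\mathrm{per}(z_r)\to b_m(z_r)$ and the claimed identity $p(z_R,z_r)=\sum_m b_m(z_r)\widehat z_R^m$ together with the stated bound.

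The main obstacle is the convergence $\widehat z_{\Lambda,R}^\mathrm{per}\to\widehat z_R$ and, more generally, the control of $b_{\Lambda,m}^\mathrm{per}\to b_m$. Unlike in the penetrable case of Theorem~\ref{thm:col1}, the effective activity is itself a nontrivial cluster-expansion sum over the interacting small spheres, and the relevant functional of the pinned large-sphere configuration sees geometry out to distance $R+r$ plus possibly many small-sphere-chain hops. This is why the strict inequalities in~\eqref{hs-suff1}--\eqref{hs-suff2} are essential: they give the slack needed to absorb the error $|\widehat z_{\Lambda,R}^\mathrm{per}-\widehat z_R|$ and to keep the finite-volume hypotheses of Theorem~\ref{thm:hsper} valid uniformly in $L$ once $L$ is large.
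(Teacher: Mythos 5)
Your proposal is correct and follows essentially the same route as the paper: the paper's proof likewise observes that $\widehat z_{\Lambda,R}^\mathrm{per}\to\widehat z_R$, uses the strictness of the inequalities to secure the hypotheses of Theorem~\ref{thm:hsper} uniformly for large $\Lambda$, and then passes to the limit exactly as in the second part of the proof of Theorem~\ref{thm:col1}. Your write-up simply fills in more of the details (dominated convergence for $A_\Lambda^\mathrm{per}$, the $O(L^{d-1})$ boundary defect) that the paper leaves implicit.
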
 

\noindent The only difference with conditions~\eqref{hsper-suff0}---\eqref{hsper-suff1} is that the inequalities~\eqref{hs-suff1} and~\eqref{hs-suff2} is strict.

\begin{proof}
	As $\Lambda\nearrow \R^3$, the effective activity converges:  $\widehat z_{\Lambda,R}^\mathrm{per} \to \widehat z_R$. Hence, conditions ~\eqref{hs-suff0}---\eqref{hs-suff2} guarantee that for sufficiently large $\Lambda$, the conditions of Theorem~\ref{thm:hsper} are met, uniformly in $\Lambda$.  Theorem~\ref{thm:hsper} is deduced by a passage to the limit similar to the second part of the proof of Theorem~\ref{thm:col1}. 
\end{proof} 

Let us provide simplified convergence conditions similar to  Lemma~\ref{cor:csuffeasy}. 

\begin{lemma}\label{lem:suff-hs}
	Fix $z_r\in \mathbb C$ with  $|B(0,2r)|\, |z_r|< 1/\mathrm{e}$. Then there exist $b,c>0$ such that~\eqref{hs-suff0} holds true. Moreover, given $r,z_r,b,c$ there exist $\alpha,\kappa, R_0>0$ such that if $R\geq R_0$ and 	\be \label{finalsuff}
		|\widehat z_R|\leq \kappa\,  \frac{\exp( - \alpha |\partial_r B(0,R)|)}{R^{ \max(1,d-1)}},
	\ee
	then  conditions~\eqref{hs-suff1} and~\eqref{hs-suff2}  hold.
\end{lemma}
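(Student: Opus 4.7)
For the first claim I would argue as follows. Set $\eta:=|B(0,2r)|\,|z_r|$; by hypothesis $\eta<1/\e$. The map $c\mapsto c\,\e^{-c}$ attains its maximum $1/\e$ at $c=1$, so there exists $c>0$ with $\eta\,\e^c<c$ strictly. By continuity one can then pick $b>0$ small enough that $\eta\,\e^{b+c}\le c$, which is precisely \eqref{hs-suff0}. This part is a one-variable exercise and poses no obstacle.

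For the second claim, the plan is to allow the auxiliary parameter $a$ in \eqref{hs-suff1}--\eqref{hs-suff2} to depend on $R$. Write $A_R:=|B(0,2R)|$ and $P_R:=|\partial_r B(0,R)|$ and choose
\[
a = a_R:=\alpha_0 P_R,\qquad \alpha_0:=|z_r|\,\e^{b+c}+\varepsilon_0
\]
for some $\varepsilon_0>0$. This choice exactly absorbs the first term on the left of \eqref{hs-suff1}, so that \eqref{hs-suff1} and \eqref{hs-suff2} become, respectively,
\[
\e^{\alpha_0 P_R}\,A_R\,|\widehat z_R|\le \varepsilon_0 P_R,\qquad \e^{\alpha_0 P_R}\,P_R\,|\widehat z_R|\le b.
\]
Fixing $\alpha>\alpha_0$ and plugging in the candidate bound $|\widehat z_R|\le \kappa\, R^{-\max(1,d-1)}\,\e^{-\alpha P_R}$ reduces these to the two polynomial--times--exponential--decay inequalities
\[
\kappa\,A_R R^{-\max(1,d-1)}\,\e^{-(\alpha-\alpha_0)P_R}\le \varepsilon_0 P_R,\qquad \kappa\,P_R R^{-\max(1,d-1)}\,\e^{-(\alpha-\alpha_0)P_R}\le b,
\]
to be verified for $R\ge R_0$ with $\kappa$ small and $R_0$ large.

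The verification rests on the elementary asymptotics $A_R = c_d(2R)^d$ and $P_R=c_d\bigl[(R+r)^d-(R-r)^d\bigr]$, so that $P_R\sim 2dc_d r R^{d-1}$ for $d\ge 2$ while $P_R\equiv 4r$ for $d=1$. The exponent $\max(1,d-1)$ is chosen precisely so that $P_R R^{-\max(1,d-1)}$ is bounded in $R\ge r$ in every dimension, while $A_R R^{-\max(1,d-1)}$ grows at worst like $R$. In dimensions $d\ge 2$ the factor $P_R$ diverges, the exponential $\e^{-(\alpha-\alpha_0)P_R}$ decays super-polynomially, and both inequalities follow once $R\ge R_0$ is large, for any fixed small $\kappa$. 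In dimension one the exponential factor reduces to a constant, but the residual polynomial factor $R^{-1}$ still suffices to enforce both inequalities once $\kappa$ is small and $R_0$ is large; this is the only reason for the exponent $1$ rather than $d-1=0$ in the statement. The main subtlety---and the only place where the ansatz has been tuned---is precisely this dimensional bookkeeping: otherwise the argument is a systematic balancing of polynomial and exponential factors. The parameters $\alpha,\kappa,R_0$ produced in this way depend only on $r, z_r, b, c$, as required.
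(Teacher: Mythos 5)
Your proof is correct and follows essentially the same route as the paper's: fix $b,c$ from the one-variable bound $\sup_{x>0}x\e^{-x}=1/\e$, then take $a$ proportional to $|\partial_r B(0,R)|$ with slope just above $|z_r|\e^{b+c}$ so the small-sphere term is absorbed, and balance the remaining polynomial and exponential factors in $R$ (your $\alpha_0 P_R$ is the paper's $\alpha|z_r|P_R$ in different clothing). The only cosmetic point is that \eqref{hs-suff1}--\eqref{hs-suff2} are strict inequalities while your reduced conditions are stated with $\le$; shrinking $\kappa$ slightly disposes of this.
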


\noindent Note that as $R\to \infty$ at fixed $r>0$, we have $|\partial_r B(0,R)| = O(R^{d-1})$, 

\begin{proof}[Proof of Lemma~\ref{lem:suff-hs}]
	Let $b>0$ small enough so that $|B(0,2r)|\, |z_r|\exp(b)\leq 1/\mathrm{e}$. Remembering that $1/\mathrm{e} = \sup_{x>0} x\exp(-x)$, 
	we deduce that $|B(0,2r)|\, |z_r|\exp(b)\leq c\exp(-c)$ for some $c>0$ and condition~\eqref{hs-suff0} holds true. Let us fix a possible choice of $b,c$.
	
	If~\eqref{hs-suff1} holds true with $\widehat z_R\neq 0$, then necessarily $a> |\partial_r B(0,R)|\, |z_r| \e^{b+ c}$. Thus given $b,c,z_r$, let us choose $a:= \alpha |\partial_r B(0,R)|\, |z_r| $ with $\alpha>\e^{b+c}$. 
	Then conditions~\eqref{hs-suff1} and~\eqref{hs-suff2}  hold true if and only if 
	\be
		\e^a\, |\widehat z_R|< \min \Bigl( \bigl(\alpha- \e^{b+c} \bigr) \frac{|\partial_r B(0,R)|}{|B(0,R)|}, \frac{b}{|\partial_r B(0,R)|} \Bigr).
	\ee
	As $R\to \infty$ at fixed $\alpha,b,c$, the minimum on the right-hand side scales as $\min (R^{-1}, R^{1-d})$ and the lemma follows. 
\end{proof} 

\subsubsection*{Acknowledgments} 
The main part of this article was completed when both authors were members of the Department of Mathematics
at the University of Sussex; 
the authors acknowledge the department for the nice atmosphere. 
The authors also wish to acknowledge Stephen J. Tate who contributed at initial stages of this
work.

\bibliographystyle{amsalpha}
\bibliography{colloid}

\end{document}